\newtheorem{definition}{Definition}[section]
\newtheorem{lemma}[definition]{Lemma}
\newtheorem{theorem}[definition]{Theorem}
\newtheorem{cor}[definition]{Corollary}
\newtheorem{proposition}[definition]{Proposition}
\newtheorem{observation}{Observation}
\newtheorem{remark}{Remark}
\newtheorem{fact}[observation]{Fact}
\newcommand{\Prob}[2]{\mathbf{P}_{#1} \left( #2 \right)}
\newcommand{\graph}{\mathcal{G}}
\newcommand{\nodes}{\mathcal{V}}
\newcommand{\edges}{\mathcal{E}}
\newcommand{\source}{\mathrm{source}}
\newcommand{\sink}{\mathrm{sink}}
\newcommand{\w}{w}
\newcommand{\pot}[1]{{p}^{(#1)}}
\newcommand{\vpot}[1]{\mathbf{p}^{(#1)}}
\newcommand{\est}[2]{{V}_{#2}^{(#1)}}
\newcommand{\ntok}[2]{{Z}_{#2}^{(#1)}}
\newcommand{\vol}{\mathrm{vol}}
\newcommand{\volmin}{\mathrm{vol}_\mathrm{min}}
\newcommand{\volmax}{\mathrm{vol}_\mathrm{max}}
\newcommand{\bone}{\mathbf{b}}
\newcommand{\isthere}[1]{{X}^{(#1)}}
\newcommand{\eigr}{\underline{\lambda}}
\newcommand{\reig}{\overline{\lambda}}
\newcommand{\modA}{\overline{A}}
\newcommand{\modW}{\underline{P}}
\newcommand{\modN}{\underline{N}}
\newcommand{\modD}{\underline{D}}
\newcommand{\gA}{\underline{A}}
\newcommand{\gD}{\overline{D}}
\newcommand{\modG}{\overline{\mathcal{G}}}
\newcommand{\modlap}{\underline{L}}
\newcommand{\glap}{\overline{L}}
\newcommand{\nlap}{\mathcal{L}}
\newcommand{\nglap}{\underline{\nlap}}
\newcommand{\smat}{\ensuremath{\Delta}}
\newcommand{\modb}{\underline{\vec b}}
\newcommand{\energy}{{E}}
\newcommand{\modx}{\underline{\vec x}}
\newcommand{\grM}{\mathrm{gr}(M)}
\newcommand{\grP}{\mathrm{gr}(\trans)}
\newcommand{\grmodP}{\mathrm{gr}(\modW)}
\newcommand{\trans}{P}
\newcommand{\eigtrans}{\rho}
\newcommand{\grho}{\underline{\rho}}
\newcommand{\glam}{\underline{\lambda}}
\xpretocmd{\footnote}{\unskip}{}{}
\def\bx{{\bf x}}
\def\by{{\bf y}}
\def\bu{{\bf u}}
\def\bw{{\bf w}}
\def\bz{{\bf z}}
\newcommand\eqdef{\defas}
\renewcommand{\bone}{{\vec 1}}
\global\long\def\error{\vec e}
\global\long\def\errort{\error_{\perp}}
\global\long\def\parcoef{\alpha}
\newcommand{\inorm}[1]{\Vert #1\Vert} 
\begin{document}

\title{\textbf{Pooling or Sampling: Collective Dynamics\\ for 
Electrical Flow Estimation}
\thanks{This work is partly supported by the EU FET project MULTIPLEX no. 
317532 and by the National Science Foundation under Grants No. CCF 1540685 and
CCF 1655215.}
}
\author{Luca Becchetti\\
    {\footnotesize{}Sapienza Università di Roma}\\
    {\footnotesize{} Rome, Italy}\\
    {\footnotesize{}\texttt{becchetti@diag.uniroma1.it}} 
    \and Vincenzo Bonifaci \\ 
    {\footnotesize{}Consiglio Nazionale delle Ricerche}\\
    {\footnotesize{} Rome, Italy}\\
    {\footnotesize{}\texttt{vincenzo.bonifaci@iasi.cnr.it}} 
    \and Emanuele Natale\\
    {\footnotesize{}Max Planck Institute for Informatics}\\
    {\footnotesize{}Saarbr\"ucken, Germany }\\
    {\footnotesize{}\texttt{emanuele.natale@mpi-inf.mpg.de}} 
}

\maketitle

\begin{abstract}  
    The computation of electrical flows is a crucial primitive for many
    recently proposed optimization algorithms on weighted networks. 
    While typically implemented as a centralized subroutine, the 
    ability to perform this task in a fully 
    decentralized way is implicit in a number of biological systems. 
    Thus, a natural question is whether this task can provably be 
    accomplished in an efficient way by a network of agents executing a simple protocol. 
    
    We provide a positive answer, proposing two distributed approaches 
    to electrical flow computation on a weighted network: a 
    deterministic process mimicking Jacobi's
    iterative method for solving linear systems, and a randomized token
    diffusion process, based on revisiting a classical random
    walk process on a graph with an absorbing node. 
    We show that both processes converge to a solution of Kirchhoff's node
    potential equations, derive bounds on their convergence rates in terms 
    of the weights of the network, and analyze their time and
    message complexity. 
    

\end{abstract}


\section{Introduction}
The computation of currents and voltages in a resistive electrical network, besides being an  interesting problem on its own, is a crucial primitive in many recently proposed optimization algorithms on weighted networks. Examples include the fast computation of maximum flows \cite{Christiano:2011,Lee:2013}, network sparsification \cite{Spielman:2011}, and the generation of random spanning trees \cite{Kelner:2009}. 

Solving the electrical flow problem requires solving a system of linear equations, whose variables are the electrical voltages, or ``potentials'', at the nodes of the network (equivalently, the currents traversing its edges). Performing this task can be computationally nontrivial, and is typically achieved in a centralized fashion. 

At the same time, the ability to perform this task in a fully decentralized way is implicit in a number of biological systems by virtue of the electronic-hydraulic analogy \cite{kirby_micro-_2013}, including the \emph{P. polycephalum} slime mold \cite{Tero:2007, bonifaci2012physarum,Straszak:2016:soda} and ant colonies \cite{Ma:2013}. 
These organisms have been showed to implicitly solve the electrical flow problem in the
process of forming food-transportation networks.
Such capability of biological systems naturally raises the following questions, which motivate our paper:
\begin{enumerate}
\item[(Q1)] Can this task be collectively accomplished by the network itself, if every node is an agent that follows an elementary protocol, and each agent can only interact with its immediate neighbours, otherwise possessing no knowledge of the underlying topology?
\item[(Q2)] In case of a positive answer to {\em Q1}, what is the involved computational effort for the network, in terms of convergence time and communication overhead?
\end{enumerate}

We address the two aforementioned questions by providing analytical bounds
which are of interest for many bio-inspired multi-agent systems in swarm
robotics and sensor networks (e.g.
\cite{russell_ant_1999,fujisawa_designing_2014}).

\subsection{Our contribution}
We propose two complementary, fully 
decentralized approaches to electrical flow computation on a weighted 
network. In particular, Towards question (Q1), we make the following 
contributions:
\begin{enumerate}
\item We consider a deterministic distributed 
process, based on \emph{Jacobi's iterative method} for solving linear 
systems. This process converges to a solution of Kirchhoff's node 
potential equations. We bound the convergence rate of this process in 
terms of a graph-theoretic parameter of the network -- \emph{graph 
conductance}. 

\item Driven by a natural probabilistic interpretation of the 
aforementioned process, we further consider a \emph{randomized token 
diffusion} process, implementing Monte Carlo sampling via independent 
random walks. This process also 
converges to a solution of Kirchhoff's node potential equations, but
differently from the deterministic algorithm, randomized token 
diffusion does not involve any arithmetics on 
real numbers: each agent/node simply maintains a counter of the number of 
random walks currently visiting it, from which a simple unbiased 
estimator of the node's potential can be derived.
We derive a bound on the convergence rate of this process in terms of another 
graph-theoretic parameter -- \emph{edge expansion}. 
\end{enumerate}

With respect to question (Q2), while the strong connection between electrical flows and
random walks has been known for a while and has been extensively 
investigated in the past \cite{Doyle:1984,Tetali:1991}, any effective exploitation of 
electrical flow computation crucially requires explicit and plausible bounds on the 
efficiency and accuracy of the algorithm(s) under consideration. 
In this respect, besides establishing the correctness of the two algorithmic approaches
above, our core contribution is to derive detailed bounds on their time and
communication complexities in terms of fundamental combinatorial properties of
the network.

Finally, our results highlight the algorithmic potential of classical 
models of opinion formation, as discussed in more detail in the next 
section.

\subsection{Related work}
We briefly review contributions that are most closely related to the spirit of 
this work.
\paragraph{Computing electrical flows.}
The problem of computing voltages and currents of a given resistive
network, that is, the question of solving Kirchhoff's equations, is a classical
example of solving a linear equation system with a \emph{Laplacian} constraint
matrix \cite{Vishnoi:2013}. While Jacobi's method is a well-known 
approach to the solution of a class of linear systems that subsumes Laplacian systems, its
complexity analysis in the literature (for example, in
\cite{Saad:2003,Varga:2010}) is generic, and does not exploit the additional
matrix structure that is inherent in Laplacian systems. In our setting, 
existing results on the convergence of Jacobi's method could be leveraged to
prove convergence to a correct solution, but they would fail to provide 
explicit bounds on convergence rate. 

\paragraph{Electrical flows and random walks.}
Relations between electrical quantities and statistical properties of
random walks have been known for a long time, and are nicely discussed, for
example, in Doyle and Snell \cite{Doyle:1984}, Lovasz \cite{Lovasz:1996}, and Levin, Peres and Wilmer \cite{Levin:2009}. In particular,
in their monograph, Doyle and Snell point out the interpretation of the electrical current along an
edge as the \emph{expected number of net traversals} of the edge by a random
walker that is injected at the source node and absorbed at the sink node of the
flow. 
While our randomized token diffusion algorithm refers to the same underlying
process, it crucially differs from the former in the  
interpretation of electrical current, as it links the electric 
potential of a node to the expected number of random walks currently 
visting the node (see Section \ref{se:rw}
). 

This connection has been also explored by Chandra et al. \cite{Chandra:1997}, who
characterized the \emph{cover time} of the random walk in terms of the maximal
effective resistance of the network, as well as by Tetali \cite{Tetali:1991},
who characterized the \emph{hitting time} of the random walk in terms of the
effective resistance between source and sink. Tetali \cite{Tetali:1991} also
proved that the expected number of visits to a node by a random walker injected
at the source and absorbed at the sink is related to the electrical potential of
the node in a simple way. In principle, like the
one discussed by Doyle and Snell, this characterization could be
used as the  basis for another random walk-based approach to the estimation of
electric potentials, with essentially the same complexity as the method we
propose. Still, this is a static characterization that, by itself,
does not provide an iterative algorithm or error bound. 
On the contrary, our interpretation results in an
estimator, which depends solely on the number of tokens at each node and 
is thus entirely \emph{local}, as opposed to previous methods, which 
entail tracking an event that depends on \emph{global} properties of the network (such as the hitting
time of a specific node or the absorption at the sink).
Hence, we believe our proposed randomized diffusion process is more
suitable to accommodate dynamic changes in the weights of the network when
coupled with other processes, such as the \emph{Physarum} dynamics
\cite{bonifaci2012physarum, Ma:2013}. 

\paragraph{Electrical flows and complex systems.}
Understanding how electrical flows are
computed in a decentralized fashion can help explain the emergent 
behavior of certain social and biological systems. 
For example, foraging behaviors of the \emph{P. polycephalum} slime
mold \cite{Tero:2007,bonifaci2012physarum,Straszak:2016:soda}
and of ant colonies \cite{Ma:2013} can both be formulated in terms of
\emph{current-reinforced} random walks, see Ma et al. \cite{Ma:2013}. In this
respect, our results can be seen as a step towards a more thorough
understanding of these complex biological processes, \emph{at the microscopic scale}. 
Moreover, the simple processes we propose shed new light on the computational 
properties of models of opinion dynamics in social networks
\cite{degroot1974reaching,acemoglu2011opinion,mossel_majority_2014}. In 
particular, the classical model of opinion formation 
proposed by DeGroot \cite{degroot1974reaching} essentially corresponds to 
the decentralized version of Jacobi's iterative method presented in 
this paper.\footnote{The only difference is the presence of two ``special'' 
agents (the source and the sink), whose behaviours slightly differ 
from the others, in that they exchange information with the exterior in the 
form of a current flow.} This is a hint that opinion dynamics are 
extremely versatile processes, whose algorithmic potential is not 
completely understood.

\paragraph{Distributed optimization.}
Since the electrical flow is one of minimum energy, the
decentralized computation of electrical flows can be seen as an instance of
\emph{distributed optimization}, 
akin to the problems considered within the multiagent framework introduced by 
\cite{Tsui:2003} and of potential interest for the class of distributed
constraint optimization problems considered, for example, in \cite{Modi:2003}. 
Although for other distributed problems it has been suggested that Laplacian-based approaches are not the most computationally effective \cite{Elhage:2010}, as we have mentioned in the previous paragraph current-reinforced random walks are considered a feasible model, at least for certain biological systems \cite{Ma:2013,Figueiredo:2017}. 
In this paper, we leveraged the specific structure of electrical flows to
prove the effectiveness of our decentralized solutions. In particular, the
dependency of convergence rates on the size of the network is polynomial,
which cannot be claimed for other more generic approaches to distributed optimization. 

Finally, in the rather different context of social choice and agents with
preferences (as opposed to our perspective motivated by natural processes for
network optimization),
\cite{salehi-abari_empathetic_2014} investigate mechanisms for social choice on
social networks as a weighted form of classical preference aggregation. One of
the update processes they consider is loosely related  to our Jacobi process.

\subsection{Outline}
The rest of this paper is organized as follows. In Section \ref{se:prelim} we
discuss some preliminaries about electrical networks and flows and set up the
necessary notation and terminology. In Section \ref{se:jacobi} we describe and
analyze the deterministic distributed algorithm based on Jacobi's method for the
solution of Kirchhoff's equations. In Section \ref{se:rw} we propose and analyze
our randomized token-diffusion method for the estimation of the electric
potentials. We conclude by summarizing our findings in Section \ref{se:conclu}.

\section{Preliminaries on electrical networks and notation}
\label{se:prelim}

We consider a graph $\graph=(\nodes,\edges,w)$, with node set $\nodes$, edge set $\edges$, and positive edge weights $(w_e)_{e \in \edges}$ representing electrical conductances. We also denote the weight of an edge by $w_{uv}$ if $u, v \in \nodes$ are the endpoints of the edge; if no edge corresponds to the pair $(u,v)$, $w_{uv}=0$. We use $n$ and $m$ to denote the number of nodes and edges, respectively, of the graph. Without loss of generality we assume that $\nodes=\{1,2,\ldots,n\}$. 

The (weighted) \emph{adjacency matrix} of $\graph$ is the matrix $A \in
\Real^{n \times n}$ whose $(u,v)$-entry is equal to $w_{uv}$ if $\{u,v\} \in
\edges$, and to $0$ otherwise. 
The \emph{volume} (or generalized degree) of a node $v$ is the total weight of
the edges incident to it, and is denoted by $\vol(v)$. The generalized
\emph{degree matrix} $D \in \Real^{n\times n}$ is the diagonal matrix with
$D_{uu} = \vol(u)$. 
The matrix $P \defas D^{-1}A$ is the \emph{transition matrix} of $\graph$; we denote its eigenvalues (which are all real) by 
$\eigtrans_1 \ge \ldots \ge \eigtrans_n$. 
We use $\volmin$ and $\volmax$ to denote the smallest and largest volume,
respectively, of the nodes of $\graph$.

We adhere to standard linear algebra notation, and we reserve boldface type for vectors. We denote by $\vec \chi_i$ the $i$-th standard basis vector, that is, a vector whose entries are 0 except for the $i$-th entry which is $1$. With $\vec 0$ and $\vec 1$ we denote vectors with entries all equal to 0 and 1, respectively. 

In the next sections, we make use of the following fact. 
\begin{fact}
    \label{fact:similar}
    The transition matrix $\trans \defas D^{-1}A$ is similar to the symmetric matrix
    $N \defas D^{-1/2}AD^{-1/2}$ via conjugation by the matrix $D^{-1/2}$. In
    particular, we have
    \[
        \trans^{t} 
            = (D^{-1} A)^{t} 
            = (D^{-\frac 12} N D^{\frac 12})^{t} 
            = D^{-\frac 12} N^{t} D^{\frac 12}. 
    \]
    Moreover, thanks to the fact that $N$ is symmetric, $N$ has $n$ orthonormal
    eigenvectors $\vec x_1, \dots, \vec x_n$, which correspond to the eigenvectors
    $\vec y_1, \dots, \vec y_n$ of $\trans$ via the similarity transformation
    $\vec x_i = D^{1/2} \vec y_i$ for each $i$. 
    Observe also that both $\vec x_i$ and $\vec y_i$, for each $i$, are associated to
    the same eigenvalue $\eigtrans_i$ of $\trans$.
\end{fact}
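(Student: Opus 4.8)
The plan is to treat this as a direct exercise in linear algebra, exploiting that $D$ is diagonal with strictly positive diagonal entries. The latter holds because the edge weights are positive, so every non-isolated node satisfies $\vol(u) > 0$; consequently $D$ is invertible and its fractional powers $D^{1/2}$ and $D^{-1/2}$ are well defined. First I would establish the similarity relation $\trans = D^{-1/2} N D^{1/2}$ by substituting $N = D^{-1/2} A D^{-1/2}$ and simplifying. Since $D$ is diagonal, its powers commute and satisfy $D^{-1/2} D^{-1/2} = D^{-1}$ and $D^{-1/2} D^{1/2} = I$, so that $D^{-1/2} N D^{1/2} = D^{-1} A = \trans$. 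This identifies $S \defas D^{-1/2}$ as the conjugating matrix, with $S^{-1} = D^{1/2}$.

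Next I would derive the power identity from similarity by telescoping. Writing $\trans^t = (S N S^{-1})^t$ and expanding the product of $t$ factors, every adjacent interior pair $S^{-1} S$ collapses to the identity, leaving $\trans^t = S N^t S^{-1} = D^{-1/2} N^t D^{1/2}$. This is exactly the displayed chain of equalities in the statement.

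For the eigenvector correspondence I would invoke the spectral theorem: $N$ is real symmetric, hence it admits an orthonormal basis of eigenvectors $\vec x_1, \dots, \vec x_n$ with real eigenvalues, which we label $\eigtrans_1, \dots, \eigtrans_n$. Setting $\vec y_i \defas D^{-1/2} \vec x_i$ (equivalently $\vec x_i = D^{1/2} \vec y_i$), a one-line computation $\trans \vec y_i = D^{-1/2} N D^{1/2} D^{-1/2} \vec x_i = D^{-1/2} N \vec x_i = \eigtrans_i D^{-1/2} \vec x_i = \eigtrans_i \vec y_i$ shows that $\vec y_i$ is an eigenvector of $\trans$ associated with the same eigenvalue $\eigtrans_i$. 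Because similarity preserves the spectrum, these exhaust the eigenvalues of $\trans$, and in particular all eigenvalues of $\trans$ are real, as asserted in the text preceding the statement.

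There is no genuine obstacle here; the only points requiring care are bookkeeping ones. One must keep the direction of the similarity transformation straight, so that the eigenvector map reads $\vec x_i \mapsto D^{-1/2} \vec x_i$ and not its inverse, and one should record the standing assumption that $D$ is invertible, i.e.\ that there are no isolated nodes, which is what makes the fractional powers $D^{\pm 1/2}$ meaningful in the first place.
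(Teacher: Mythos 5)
Your proposal is correct and is essentially the argument the paper intends: the paper states this as a Fact with no separate proof, and the verification it relies on is exactly yours --- the one-line conjugation $D^{-1/2} N D^{1/2} = D^{-1}A = \trans$, telescoping of $(D^{-1/2} N D^{1/2})^t$, and the spectral theorem for the real symmetric matrix $N$ combined with the eigenvector map $\vec y_i = D^{-1/2}\vec x_i$. Your added remark that $D$ must be invertible (no isolated nodes, guaranteed here since the graph is connected with positive edge weights) is a correct and worthwhile bookkeeping point that the paper leaves implicit.
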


The \emph{Laplacian matrix} of $\graph$ is the $n \times n$ matrix $L \defas D - A$. 
We denote by $\Lambda_1 \le \ldots \le \Lambda_n$ the eigenvalues of $L$. We often use the facts that $\Lambda_1 = 0$ and that $L \cdot \vec 1 = \vec 0$. 

In our setting, one node of the graph acts as the \emph{source}, and one as the
\emph{sink} of the electrical flow. Kirchhoff's equations for a network
$\graph$ are then neatly expressed by the linear system 
\[ 
    L \vec p = \vec b, 
\]
where $\vec p $ is the unknown vector
of electric \emph{potentials}, and $\vec b \in \Real^n$ is a vector such that
$b_{\source}=1$, $b_{\sink}=-1$, and $b_u = 0$ if $u \notin \{\source,
\sink\}$. 
The \emph{electrical flow} is easily obtained from the vector $\vec p$: the electrical flow along an edge $\{u,v\}$, in the direction from $u$ to $v$, equals $w_{uv} \cdot (p_u - p_v)$. 

For a given weighted graph with a source and a sink, the electrical flow is uniquely defined. 
We remark however that Kirchhoff's equations have infinite solutions, since electric  potentials are defined up to any constant offset: if $L \vec p = \vec b$, then $L (\vec p + c \bone) = \vec b$ for any constant $c$ (since $L \bone = \vec 0$.) We call the (unique) solution $\vec p$ such that $p_{\sink}=0$ the \emph{grounded} solution to Kirchhoff's equations. 

The \emph{graph conductance} (or \emph{bottleneck ratio}) of graph $\graph$ is the constant
\[
\phi(\graph) \defas \min_{S \subset \nodes \,:\, \vol(S) \le \vol(\nodes)/2} 
\frac{ w(S, \nodes \setminus S) }{\vol(S)}, 
\]
where $\vol(S)$ denotes the total volume of the nodes in $S$, and $w(S, \nodes\setminus S)$ denotes the total weight of the edges crossing the cut $(S, \nodes\setminus S)$. 
The \emph{edge expansion} of $\graph$ is the constant
\[
    \theta(\graph) \defas \min_{S \subset \nodes \,:\, \card{S} \le n/2}
    \frac{w(S,\nodes\setminus S)}{\card{S}}. 
\]
The graph conductance is a number between 0 and 1, while the edge expansion is a number between 0 and $\volmin$.

\section{Jacobi's method}
\label{se:jacobi}

The potentials $\vec p$ are a solution of the linear system
\begin{equation}
L \vec p = \vec b.\label{eq:p_def}
\end{equation}
A classic parallel iterative algorithm for solving such a system is Jacobi's method \cite{Saad:2003,Varga:2010}, which goes as follows. 
System \eqref{eq:p_def} can be rewritten as $D \vec p-A\vec p= \vec b$,
which is equivalent to 
\begin{equation}
\vec p=D^{-1}\left(A \vec p+ \vec b\right).\label{eq:p_recur_fix}
\end{equation}

The idea underlying Jacobi's method is to introduce the related linear recurrence system 
\begin{equation}
\label{eq:p_recur}
\tilde{\vec p}\left(t+1\right)=D^{-1}\left(A\tilde{\vec p}\left(t\right)+ \vec b\right).
\end{equation}

For any node $u \in \nodes$, (\ref{eq:p_recur}) becomes
\begin{equation}
\tilde{p}_{u}(t+1)=\frac{b_u + \sum_{v\sim u} w_{uv} \tilde{p}_{v}(t)}{\sum_{v\sim u}w_{uv}},\label{eq:p_iterate}
\end{equation}
where the sums range on all neighbors $v$ of $u$. Note that the denominator in \eqref{eq:p_iterate} equals $\vol(u)$. 

Indeed, from (\ref{eq:p_recur_fix}) we know that $\tilde{\vec p}$ is a fixed
point of (\ref{eq:p_iterate}) whenever $L \tilde{\vec p}= \vec b$. This
suggests running the following message-passing algorithm at every node $u \in
\nodes$ (Algorithm \ref{algo:jaco}). 
\begin{algorithm}
	\DontPrintSemicolon 
    \SetKwFor{ProbDo}{with prob}{do}{}
    \SetKwInput{Parameter}{Parameter}
	\Parameter{$u \in \nodes$}
	\While{true}{%
	\tcp*[h]{Step 1: Send potentials}  \;
	\For{every neighbor $v$ of $u$} {
            send $\tilde{p}_u$ to $v$ \; 
	}
	\tcp*[h]{Step 2: Receive potentials}  \;
	\For{every neighbor $v$ of $u$} {
		receive $\tilde{p}_v$ from $v$ \; 
	}
	\tcp*[h]{Step 3: Update potentials}  \;
	$\tilde{p}_{u} = \frac{1}{\vol(u)} \left( b_u + \sum_{v\sim u} w_{uv} \tilde{p}_{v} \right)$
	}
	\caption{Jacobi's method for solving $L\vec p=\vec b$.}
	\label{algo:jaco}
\end{algorithm}

Algorithm \ref{algo:jaco} does not specify an initial value for $\tilde{\vec p}$: any initial value can be used. There is also no explicit termination condition. Terminating the algorithm sooner or later has only an effect on the numerical error, as explained in the next subsection. 

\subsection{Correctness and rate of convergence}

To study the convergence of Algorithm \ref{algo:jaco}, fix any solution $\vec p$ of \eqref{eq:p_def}, and define the \emph{error at step $t$} as
\begin{equation}
\error\left(t\right) \defas
\vec p-\tilde{\vec p}\left(t\right)=\errort\left(t\right)+\parcoef(t)\cdot\bone,\label{eq:e_decomp}
\end{equation}
 where $\errort\left(t\right)$ is the projection of $\vec p-\tilde{\vec p}\left(t\right)$
on the subspace orthogonal to $\bone,$ i.e.,  
\[
\errort\left(t\right)=\left(I-\frac{1}{n}\bone\bone\tp \right)\error\left(t\right),
\]
while $\parcoef(t)\cdot\bone$ is the component of $\vec p-\tilde{\vec p}\left(t\right)$
parallel to $\bone$, i.e., 
\[
\alpha{(t) \, \cdot\,\bone}=\frac{1}{n}\bone\bone\tp\error\left(t\right).
\]

The reason we decompose $\error\left(t\right)$ as in (\ref{eq:e_decomp})
is that $\vec p$ is defined in (\ref{eq:p_def}) \emph{up
to translation along $\bone$}, since $\bone$ is in the kernel of
$L$: 
any vector $\vec p+\beta \cdot\bone$ satisfies $L\left(\vec p+\beta \cdot\bone\right)=\vec b$
as well. Therefore, one has converged to a solution of (\ref{eq:p_def})
as soon as $\tilde{\vec p}=\vec p+ \beta \cdot\bone$ for any $\beta$, which
implies that in (\ref{eq:e_decomp}) we do not care about the value
of $\parcoef$, we only care about $\errort\left(t\right)$ having a
small norm. This becomes clear by plugging the decomposition (\ref{eq:e_decomp})
in (\ref{eq:p_recur}): 
\begin{align}
        \vec p-\errort\left(t+1\right)-\parcoef {(t+1)} \cdot\bone &
        \eqdef\tilde{\vec p}\left(t+1\right)\nonumber\\
         & =D^{-1}\left(A\tilde{\vec p}\left(t\right)+\vec b\right)\nonumber\\
         & =D^{-1}\left(A\left(\vec
         p-\errort\left(t\right)-\parcoef{(t)}\cdot\bone\right)+\vec
         b\right)\nonumber\\
         & \stackrel{\left(a\right)}{=}\vec p-\trans \errort\left(t\right)-\parcoef{(t)}\cdot\bone,
    \label{eq:trick}
\end{align}
where in $\left(a\right)$ we used that for any transition matrix $\trans \defas D^{-1}A$ it
holds $\trans \bone=\bone$ and (\ref{eq:p_recur_fix}). From \eqref{eq:trick} it
follows that {$\errort(t+1) = \trans \errort(t) - (\alpha(t+1) - \alpha(t))
\bone$}. 

By projecting, it also follows that 
\[
    \begin{split}
        \errort(t+1) &= \left(I - \frac{1}{n} \bone \bone\tp\right) \trans
            \errort(t)\\
        &= \left(I - \frac{1}{n} \bone \bone\tp\right) \trans
            \left(I - \frac{1}{n} \bone \bone\tp\right) \trans
            \errort(t-1)\\
            &\overset{(a)}{=} \left(I - \frac{1}{n} \bone \bone\tp\right) 
            \left(\trans - \frac{1}{n} \bone \bone\tp\right) \trans
            \errort(t-1)\\
        &\overset{(b)}{=} \left(I - \frac{1}{n} \bone \bone\tp\right) 
            \trans^2
            \errort(t-1),
    \end{split}
\]
where in $(a)$ we used again that $\trans\bone=\bone$ and in $(b)$ we used that
$\left(I - \frac{1}{n} \bone \bone\tp\right)\frac{1}{n} \bone \bone\tp = 0$.
By repeating steps $(a)$ and $(b)$ above we can unroll the previous equation
and get  
\begin{equation}
    \errort(t) = \left(I - \frac{1}{n} \bone \bone\tp\right) \trans^{t} \errort(0).
    \label{eq:perperror}
\end{equation}
Recall Fact \ref{fact:similar}, which implies that   
$\vec x_1 = \inorm{D^{1/2}\bone}^{-1}D^{1/2}\bone$, and that we can write  
\[
    N = \inorm{D^{\frac 12}\bone}^{-2}D^{\frac 12}\bone (D^{\frac 12}\bone)\tp +
    \sum_{i=2}^{n} \eigtrans_i \vec x_i \vec x_i\tp. 
\]
Combining the previous observations with \eqref{eq:perperror}, we get
\begin{align}
        \norm{\errort\left(t\right)}
            &=\norm{\left(I - \frac{1}{n} \bone \bone\tp\right) \trans^{t}
                \errort(0)}\nonumber\\
            &=\norm{\left(I - \frac{1}{n} \bone \bone\tp\right) D^{-\frac 12} N^{t} D^{\frac 12}
                \errort(0)}\nonumber\\
            &\overset{(a)}{=}\norm{\left(I - \frac{1}{n} \bone \bone\tp\right) D^{-\frac 12}
                \left( \sum_{i=2}^{n} \eigtrans_i^{t} \vec x_i \vec
                x_{i}\tp \right) D^{\frac 12} \errort(0)}\nonumber\\
            &\overset{(b)}{\leq}\norm{\left(I - \frac{1}{n} \bone \bone\tp\right)}
                \norm {D^{-\frac 12}}
                \norm{ \sum_{i=2}^{n} \eigtrans_i^{t} \vec x_i
                \vec x_{i}\tp} \norm{D^{\frac 12}} \norm{\errort(0)}\nonumber\\
            &\overset{(c)}{\leq}
                \sqrt{\frac{\volmax}{\volmin}} \,
                \eigtrans_{*}^{t} \norm{\errort(0)}, 
            \label{eq:infamenonorthogonal}
\end{align}
where in $(a)$ we used the fact that 
$\left(I - \frac{1}{n} \bone \bone\tp\right) D^{-1/2} \vec x_1 = 0$, 
in $(b)$ we used the submultiplicativity of the norm and in $(c)$ we used that 
$\inorm{D^{-1/2}} \leq (\volmin)^{-1/2}$, that $ \inorm {D^{
1/2}} \leq (\volmax)^{1/2}$ and where by definition $\eigtrans_{*} = \max_{i \neq 1} 
|\eigtrans_i| = \max(\abs{\eigtrans_2},\abs{\eigtrans_n})$.


\subsection{Time and message complexity}
In summary, the arguments in the previous subsection prove the following. 

\begin{theorem}
    \label{thm:jacobi-tc}
    After $t$ rounds, the orthogonal component of the error of the solution
    $\tilde{\vec p}(t)$ produced by Algorithm \ref{algo:jaco} is reduced by a
    factor
    \begin{equation}
        \frac{\norm{\errort(t)}}{\norm{\errort(0)}} \le 
                  {\left(\frac{\volmax}{\volmin}\right)}^{1/2} \,
                    \eigtrans_*^{t}, 
        \label{eq:jacobi-tc}
    \end{equation}
    where $\eigtrans_*$ is the second largest absolute value of an eigenvalue of
    $\trans$. The message complexity per round is $O(m)$. 
\end{theorem}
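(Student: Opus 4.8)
The inequality \eqref{eq:jacobi-tc} is precisely the bound already derived in \eqref{eq:infamenonorthogonal}, so the plan is to reassemble that derivation and then append the (immediate) message count. First I would set up the error recurrence. Writing $\error(t) = \vec p - \tilde{\vec p}(t)$ and decomposing it as in \eqref{eq:e_decomp} into its part $\parcoef(t)\cdot\bone$ parallel to $\bone$ and its orthogonal part $\errort(t)$, I would substitute this decomposition into the Jacobi update \eqref{eq:p_recur} and use $\trans\bone = \bone$ together with \eqref{eq:p_recur_fix} to get $\errort(t+1) = \trans\errort(t) - (\parcoef(t+1) - \parcoef(t))\bone$. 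Projecting both sides onto the orthogonal complement of $\bone$ annihilates the $\bone$ term and leaves $\errort(t+1) = (I - \frac{1}{n}\bone\bone\tp)\trans\errort(t)$. The reason only $\errort$ matters is that $\vec p$ is defined only up to translation along $\bone \in \ker L$, so convergence means $\errort(t) \to \vec 0$ regardless of $\parcoef$.

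The crux is to unroll this recurrence into the clean form \eqref{eq:perperror}, namely $\errort(t) = (I - \frac{1}{n}\bone\bone\tp)\trans^t\errort(0)$. Substituting the recurrence one step into itself produces the product $(I - \frac{1}{n}\bone\bone\tp)\trans(I - \frac{1}{n}\bone\bone\tp)\trans$, and the subtle point is that the inner projector can be absorbed: using $\trans\bone = \bone$ one rewrites $(I - \frac{1}{n}\bone\bone\tp)\trans(I - \frac{1}{n}\bone\bone\tp) = (I - \frac{1}{n}\bone\bone\tp)(\trans - \frac{1}{n}\bone\bone\tp)$, after which $(I - \frac{1}{n}\bone\bone\tp)\frac{1}{n}\bone\bone\tp = 0$ collapses this to $(I - \frac{1}{n}\bone\bone\tp)\trans^2$. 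Iterating this step $t$ times yields \eqref{eq:perperror}.

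With \eqref{eq:perperror} in hand I would pass to the symmetric matrix $N = D^{-1/2}AD^{-1/2}$ via Fact \ref{fact:similar}, writing $\trans^t = D^{-1/2}N^t D^{1/2}$ and expanding $N^t$ in its orthonormal eigenbasis. The leading eigenvector $\vec x_1 \propto D^{1/2}\bone$ contributes a term that the projector annihilates, since $(I - \frac{1}{n}\bone\bone\tp)D^{-1/2}\vec x_1 = 0$, so only the eigenvalues $\eigtrans_2, \dots, \eigtrans_n$ survive in $\sum_{i \ge 2} \eigtrans_i^t \vec x_i \vec x_i\tp$. Bounding by submultiplicativity of the spectral norm, and using $\norm{I - \frac{1}{n}\bone\bone\tp} \le 1$, $\norm{D^{-1/2}} \le \volmin^{-1/2}$, $\norm{D^{1/2}} \le \volmax^{1/2}$, together with the fact that the orthonormality of the $\vec x_i$ makes the spectral norm of $\sum_{i \ge 2}\eigtrans_i^t \vec x_i \vec x_i\tp$ equal to $\max_{i \ge 2}|\eigtrans_i|^t = \eigtrans_*^t$, gives exactly the claimed factor $(\volmax/\volmin)^{1/2}\,\eigtrans_*^t$.

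Finally, for the per-round message complexity I would simply count the sends: in each iteration of Algorithm \ref{algo:jaco}, Step~1 has every node $u$ transmit its current potential $\tilde{p}_u$ once to each of its neighbors, so the number of messages per round equals the sum of the node degrees, i.e.\ $2m = O(m)$. The one genuinely delicate step is the projector-absorption identity that produces \eqref{eq:perperror}; once that is established, everything reduces to a routine spectral-norm estimate and a degree count.
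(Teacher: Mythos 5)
Your proposal is correct and follows essentially the same route as the paper: the paper proves the bound by exactly the derivation you reassemble (error decomposition, the projector-absorption identity yielding \eqref{eq:perperror}, the similarity transform to $N$ with the eigenbasis expansion, and the spectral-norm estimates), and its proof of Theorem \ref{thm:jacobi-tc} simply cites \eqref{eq:infamenonorthogonal} plus the same degree-count argument for the $O(m)$ message complexity. No gaps; nothing to add.
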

\begin{proof}
    The first claim follows by \eqref{eq:infamenonorthogonal}. For the second part
    of the claim, note that at any round of Algorithm \ref{algo:jaco}, each node
    sends its estimated potential value to each of its neighbors. Therefore, the
    number of exchanged messages is $O(m)$ per round. 
\end{proof}

Observe that in typical applications we have $\eigtrans_* = \eigtrans_2$.  
This is the case, for example, when one considers the \emph{lazy} variant of a
transition matrix in order to avoid pathological cases \cite[Section 1.3]{Levin:2009}. 
The condition $\eigtrans_* = \eigtrans_2$ implies, in particular, that \eqref{eq:jacobi-tc}
in Theorem \ref{thm:jacobi-tc} can be bounded in terms of the \emph{graph conductance}
of the network, 
    \footnote{We remark that here the term \emph{conductance} refers to the
    graph-theoretic notion also known as \emph{bottleneck ratio}
    \cite{Levin:2009}, and shall not be confused with the notion of
    \emph{electrical conductance} in the theory of electrical networks
    \cite{Doyle:1984}.}
since for any graph $\graph$, $\eigtrans_2(\graph) \le 1 - \phi(\graph)^2/2$  \cite[Theorem 13.14]{Levin:2009}.

\section{A token diffusion method}
\label{se:rw}

Following a well-known analogy between electrical flows and random 
walks \cite{Doyle:1984,Tetali:1991,Chandra:1997,Levin:2009}, in this 
section we propose a random walk-based approach to approximate  
electric potentials. The process is described by Algorithm 
\ref{algo:rw_phys}. In each round, the algorithm starts $K$ new, 
mutually independent random walks at the source node. Each random 
walker (or \emph{token}) moves one step during each round of the 
algorithm, until it reaches the sink node, where it is absorbed. The 
independent parameter $K$ controls the accuracy of the process. 

\begin{algorithm}
	\DontPrintSemicolon 
    \SetKwFor{ProbDo}{with prob}{do}{}
    \SetKwInput{Parameter}{Parameters}
    \Parameter{$u \in \nodes$, $K \in \Nat$}
	\tcp*[h]{Step 1: Send tokens}  \;
	\For{every token $T$ on $u$ \textbf{and} every neighbour $v$ of $u$} {
        \ProbDo{$\propto$ $w_{uv}$}{
            send $T$ to $v$ \;
            $Z(u) = Z(u) - 1$
        }
	}
	\tcp*[h]{Step 2: Receive tokens}  \;
	\For{every token $T$ received} {
		$Z(u) = Z(u) + 1$
	}
	\tcp*[h]{Step 3: Replenish source, deplete sink}  \;
	\If{$u=$ source} {
		${Z(u)} = Z(u) + K$  \tcp*{inject $K$ new tokens at $u$}
	}
	\If{$u=$ sink} {
		$Z(u) = 0$   \tcp*{absorb all tokens at $u$}
	}
	\caption{Random walk algoritm.}
	\label{algo:rw_phys}
\end{algorithm}

\begin{algorithm}
	\DontPrintSemicolon 
	\KwIn{$u \in \nodes$, $K \in \Nat$}
	Return $\frac{\mathbf{Z}(u)}{K \vol(u)}$
	\caption{Estimator.}
	\label{algo:estim}
\end{algorithm}

Let $\ntok{t}{K}(u)$ denote the number of tokens at vertex $u$ at the 
end of round $t$, when $K$ new independent random walks are started at 
the source. Our \emph{estimator} of the potential at node $u$ at time $t$ will 
be 
\[
	\est{t}{K}(u)\defas \frac{\ntok{t}{K}(u)}{K\cdot \vol(u)}.
\]
We next show that {\em in expectation}, our estimator evolves following a recurrence that, 
though not identical, is very close to \eqref{eq:p_iterate}.

\begin{lemma}
    \label{le:estimator}
    Consider Algorithm \ref{algo:rw_phys} with $K = 1$. 
    Define inductively $\vpot{t} \in \Real^\nodes$ by 
    \begin{align}
        \pot{0}_u &= 0, \qquad \text{ for all $u \in \nodes$, } \\
        \pot{t+1}_u &= 
        \begin{cases}
            \frac{1}{\vol(u)} \left( \sum_{v \sim u} w_{uv} \pot{t}_v + b_u \right) 
                & \text{ if } u\neq \sink, \\
            0 
                & \text{ if } u=\sink. 
        \end{cases}
    \label{eq:grounded}
\end{align}

Then, for every time $t=0,1,2,\ldots$ and 
for every $u \in \nodes$ we have:
\begin{equation*}
	\ex[\est{t}{1}(u)]  = \pot{t}_u.
\end{equation*}
\end{lemma}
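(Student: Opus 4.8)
The plan is to prove the equivalent statement that $\E{\ntok{t}{1}(u)} = \vol(u)\,\pot{t}_u$ for every round $t$ and every node $u \in \nodes$; dividing by the constant $\vol(u)$ and recalling that $\est{t}{1}(u) = \ntok{t}{1}(u)/\vol(u)$ then yields $\E{\est{t}{1}(u)} = \pot{t}_u$ by linearity. The appearance of the volume factor is really the crux of the whole argument: a token sitting at $v$ is forwarded to a neighbour $u$ with transition probability $\trans_{vu} = w_{vu}/\vol(v)$, so the expected number of tokens flowing from $v$ to $u$ in one round is $\ntok{t}{1}(v)\, w_{vu}/\vol(v)$, and it is precisely this $\vol(v)$ in the denominator that will cancel against the volume factor supplied by the induction hypothesis, regenerating the symmetric coefficients $w_{uv}$ that appear in \eqref{eq:grounded}.

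I would proceed by induction on $t$. For the base case $t=0$ the system contains no tokens, so $\ntok{0}{1}(u)=0=\vol(u)\,\pot{0}_u$ for all $u$. For the inductive step, I assume $\E{\ntok{t}{1}(v)} = \vol(v)\,\pot{t}_v$ for every $v$, condition on the token configuration at the end of round $t$, and trace one execution of Algorithm \ref{algo:rw_phys}. In Step 1 each of the $\ntok{t}{1}(v)$ tokens at a node $v$ is sent independently to a neighbour, landing on $u$ with probability $w_{vu}/\vol(v)$, so by linearity the conditional expected number of tokens received by $u$ in Step 2 is $\sum_{v \sim u} \ntok{t}{1}(v)\, w_{vu}/\vol(v)$. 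Taking expectations via the tower rule and substituting the induction hypothesis collapses this to $\sum_{v\sim u} w_{uv}\,\pot{t}_v$, since the $\vol(v)$ factors cancel.

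It then remains to account for Step 3, which I would handle by the same three-way case split as in \eqref{eq:grounded}. If $u=\source$, the injection of one fresh token adds $1 = b_{\source}$, so $\E{\ntok{t+1}{1}(u)} = \sum_{v\sim u} w_{uv}\pot{t}_v + b_u = \vol(u)\,\pot{t+1}_u$; if $u \notin \{\source,\sink\}$, nothing is added and $b_u = 0$, so the total is again $\vol(u)\,\pot{t+1}_u$; and if $u=\sink$, all tokens are absorbed, giving $\E{\ntok{t+1}{1}(u)} = 0 = \vol(u)\,\pot{t+1}_u$. This closes the induction.

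The arithmetic is essentially routine once the volume bookkeeping is in place; the one point that needs genuine care, and which I regard as the main obstacle, is handling the two special nodes consistently with the ordering of the steps. I must verify that absorption keeps $\ntok{t}{1}(\sink)=0$ at the end of \emph{every} round, so that the sink contributes nothing when it appears as a neighbour in the sum above (matching $\pot{t}_{\sink}=0$), and that the source injection happens \emph{after} the movement step, so that a token injected in round $t$ is already counted in $\ntok{t}{1}$ and only moves in round $t+1$. Getting this ordering right is exactly what makes the $b_u$ term land in the correct position and reproduces \eqref{eq:grounded} rather than a shifted recurrence.
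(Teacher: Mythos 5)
Your proof is correct and follows essentially the same route as the paper's: induction on $t$, computing the conditional expectation of the token count at $u$ given the configuration at round $t$ (the tower rule), and splitting cases at the source and sink exactly as in \eqref{eq:grounded}. Working with $\E{\ntok{t}{1}(u)} = \vol(u)\,\pot{t}_u$ instead of directly with the estimator $\est{t}{1}(u)$ is only a rescaling by the constant $\vol(u)$, which the paper's own proof also performs midway through.
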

\begin{proof}
The claim is proved by induction. 
It clearly holds when $t=0$, since at that time there are no tokens and 
thus $\est{t}{1}=\vec 0=\pot{0}$. For $t \ge 1$, and for every $u \in 
\nodes \setminus \{sink\}$:
\begin{align*}
	\ex[\ntok{t+1}{1}(u)\mid\; \ntok{t}{1}]
    &= \sum_{v\backsim 
    u}\frac{\ntok{t}{1}(v)\w_{vu}}{\vol(v)} + b_u \\
    &= \sum_{v\backsim u}\w_{vu}\est{t}{1}(v) + b_u,
\end{align*}
where we used $\est{t}{1}(v) = \ntok{t}{1}(v)/\vol(v)$.
Dividing both sides by $\vol(u)$, we obtain:
\begin{align*}
	&\frac{\ex[\ntok{t+1}{1}(u)\mid\; \ntok{t}{1}] }{\vol(u)} = 
	\frac{1}{\vol(u)}\left (\sum_{v\backsim 
    u}\w_{vu}\est{t}{1}(v)+ b_u \right ).
\end{align*}
By recalling that $\est{t+1}{1}(u) = \ntok{t+1}{1}(u)/\vol(u)$ and by the law of
iterated expectations we obtain:
\begin{align}\label{eq:estim}
	\ex[\est{t+1}{1}(u)] &=
\begin{cases}	
	 \frac{1}{\vol(u)}\left (\sum_{v\backsim 
    u}\w_{vu}\ex[\est{t}{1}(v)] + b_u \right ) 
    & \text{ if } u \neq \sink, \\
    0 & \text{ if } u = \sink. 
\end{cases}
\end{align}
Recurrence \eqref{eq:estim} has the very same form as 
\eqref{eq:grounded}. 
In particular, from the inductive hypothesis $\ex[\est{t}{1}] = \vpot{t}$ we obtain $\ex[\est{t+1}{1}] = \vpot{t+1}$. 
This completes the proof.
\end{proof}

The following corollary 
justifies our estimator in Algorithm \ref{algo:estim}, when $K > 1$:

\begin{cor}\label{cor:estim}
Let
\[
	\est{t}{K}(u) \defas \frac{\ntok{t}{K}(u)}{K\vol(u)}, 
\]
for every $t=0,1,2,\ldots$ and $u \in \nodes$. Then:
\[
	\ex[\est{t}{K}(u)] = \pot{t}_u.
\]
\end{cor}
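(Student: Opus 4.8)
The plan is to reduce the statement to the single-walk case already handled in Lemma \ref{le:estimator}, using the fact that the $K$ walks launched in each round are mutually independent and do not interact. Concretely, the process of Algorithm \ref{algo:rw_phys} run with parameter $K$ is nothing but the superposition of $K$ independent copies of the same process run with parameter $1$: each token performs its random walk independently of the others, and the source-replenishment and sink-absorption steps act on each token separately, so no walk's trajectory is affected by the presence of the others. Consequently, $\ntok{t}{K}(u)$ is distributed as the sum of $K$ independent copies of $\ntok{t}{1}(u)$, for every node $u$ and every round $t$.

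Granting this, the corollary follows by a one-line computation. Taking expectations and using linearity together with the identical distribution of the $K$ copies gives $\ex[\ntok{t}{K}(u)] = K \cdot \ex[\ntok{t}{1}(u)]$. Dividing through by $K \vol(u)$ then yields
\[
    \ex[\est{t}{K}(u)] = \frac{\ex[\ntok{t}{K}(u)]}{K \vol(u)} = \frac{\ex[\ntok{t}{1}(u)]}{\vol(u)} = \ex[\est{t}{1}(u)] = \pot{t}_u,
\]
where the final equality is exactly the conclusion of Lemma \ref{le:estimator}.

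The only delicate point — and the one I expect to be the main, if modest, obstacle — is justifying the additive decomposition of the occupancy counts. To make it fully rigorous one would label the walks, let $Z^{(t,j)}(u)$ denote the occupancy of $u$ at the end of round $t$ contributed by the $j$-th injected stream, observe that the $Z^{(t,j)}$ are independent across $j$ and each distributed as $\ntok{t}{1}(u)$, and write $\ntok{t}{K}(u) = \sum_{j=1}^{K} Z^{(t,j)}(u)$. This is immediate from the independence asserted in the description of Algorithm \ref{algo:rw_phys} and the per-token nature of its update rules; once it is in place, linearity of expectation does the rest, and no distributional information beyond the mean is needed.
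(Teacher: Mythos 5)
Your proof is correct and takes essentially the same route as the paper's: the paper simply asserts $\ex[\ntok{t}{K}(u)] = K\,\ex[\ntok{t}{1}(u)]$ as obvious, divides by $K\vol(u)$, and invokes Lemma \ref{le:estimator}. Your superposition-of-$K$-independent-streams argument just spells out the justification for that identity, which the paper leaves implicit.
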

\begin{proof}
First of all observe that, obviously:
\begin{equation*}
\ex[\ntok{t}{K}(u)] = K \ex[\ntok{t}{1}(u)].
\end{equation*}
As a consequence:
\begin{equation*}
\frac{\ex[\ntok{t}{K}(u)]}{\vol(u)} = 
K\frac{\ex[\ntok{t}{1}(u)]}{\vol(u)} = K\pot{t}_u,
\end{equation*}
from Lemma \ref{le:estimator}. This proves the claim.
\end{proof}

Note that the definition of 
$\vpot{t}$ in Lemma \ref{le:estimator} is akin to that of 
$\tilde{\vec p}(t)$ in Equation \ref{eq:p_iterate}. One might thus 
reasonably expect that, like $\tilde{\vec p}(t)$, $\vpot{t}$ also 
converges to a solution of Kirchhoff's equations. Nevertheless, the two 
definitions {\em are} different and establishing this requires a separate 
proof, which we give in Section \ref{subse:diffu-apx}. 

That result will justify the interpretation of $\vpot{t}$, and hence of 
the vector $\vec V_K^{(t)}$, as an iterative approximation of the 
correct Kirchhoff potentials. Note that there are two sources of 
inaccuracy in this estimation. One is intrinsic to the iterative 
process, i.e., the rate with which $\vpot{t}$ converges to a solution 
of Kirchhoff's equations; this will be the subject of Section 
\ref{se:speed_tok}. The second source of error is stochastic and 
reflects the accuracy of the estimator itself; it will be discussed in 
Section \ref{subse:high}, where we show that for a large enough $K$, 
the estimator yields an accurate approximation of the potential with 
high probability, and not only in expectation. 

\subsection{Correctness of the token diffusion method}
\label{subse:diffu-apx}

We can reexpress the system \eqref{eq:grounded} as 
\begin{align}
    \begin{cases}
        \vpot{0} &= \vec 0, \\
        \vpot{t+1} &= \modW\, \vpot{t} + D^{-1} \modb,
    \end{cases}
    \label{eq:tokendiff}
\end{align}
where $\modW$ is obtained from $\trans \defas D^{-1} A$ by zeroing out all
entries on the row and column corresponding to the sink node. Likewise,
$\modb$ is obtained from $\vec b$ by zeroing out the entry corresponding to the
sink node. 
We next prove that the spectral radius of $\modW$ is \emph{strictly} between $0$ and $1$. Using this fact, we prove that the token diffusion method converges to a
feasible potential vector. 

\begin{lemma}
    \label{lem:specrad}
    The spectral radius of $\modW$, $\grho$, satisfies $0<\grho<1$. More precisely,
    $\grho = 1 - \sum_{i=1}^n v_i \cdot P_{i,\sink} /\norm[1]{\vec v} $, where
    $\vec v$ is the left Perron eigenvector of $\modW$. 
\end{lemma}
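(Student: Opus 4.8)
The plan is to read $\modW$ as a nonnegative matrix obtained from the stochastic matrix $\trans$ by deleting the sink row and column, i.e. as the (sub)stochastic transition matrix of the random walk killed at $\sink$, and to attack the statement with Perron--Frobenius theory. First I would apply the Perron--Frobenius theorem for nonnegative matrices to obtain that $\grho = \rho(\modW)$ is itself an eigenvalue, realized by a nonnegative left eigenvector $\vec v \ge \vec 0$ with $\vec v \neq \vec 0$ and $\vec v\tp \modW = \grho\,\vec v\tp$. The upper bound $\grho \le 1$ is then immediate: for every $u$ the $u$-th row of $\modW$ is dominated by the probability distribution $(P_{uv})_v$, so all row sums are at most $1$ and $\grho \le \norm[\infty]{\modW} \le 1$. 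The real content of the lemma is the exact formula, which will upgrade this weak bound to a strict one.

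To produce the identity I would evaluate the scalar $\vec v\tp \modW \bone$ in two ways. Using the eigenvector equation, $\vec v\tp \modW \bone = \grho\,\vec v\tp \bone = \grho\,\norm[1]{\vec v}$, the last step because $\vec v \ge \vec 0$. On the other hand $\modW\bone$ is the vector of row sums of $\modW$: for $u \neq \sink$ its entry is $\sum_{v\neq\sink} P_{uv} = 1 - P_{u,\sink}$, and for $u=\sink$ it is $0$. Before combining these I need $v_{\sink}=0$, which I read off from the $\sink$-coordinate of $\vec v\tp \modW = \grho\,\vec v\tp$: its left-hand side vanishes because the $\sink$-column of $\modW$ is identically zero, so $\grho\,v_{\sink}=0$ and hence $v_{\sink}=0$ as soon as $\grho>0$ is known (established below). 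With $v_{\sink}=0$ the two evaluations give
\[
    \grho\,\norm[1]{\vec v}
      = \sum_{u} v_u\,(1 - P_{u,\sink})
      = \norm[1]{\vec v} - \sum_{u=1}^{n} v_u\,P_{u,\sink},
\]
and dividing through by $\norm[1]{\vec v}$ yields exactly the claimed expression for $\grho$.

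What remains are the two strict inequalities, and this is where I expect the genuine work to lie. For $\grho>0$ I would argue that $\modW$ is not nilpotent: whenever two non-sink nodes $u\sim v$ are adjacent, $(\modW^2)_{uu} \ge P_{uv}P_{vu} > 0$, so $\mathrm{tr}(\modW^2)$, being the sum of squares of the eigenvalues of $\modW$, is positive and forces a nonzero eigenvalue; this simultaneously secures the $v_{\sink}=0$ step above. For $\grho<1$ I would invoke the derived identity: it suffices that $\sum_i v_i P_{i,\sink}>0$, i.e. that some node carrying positive Perron mass is a neighbour of the sink. The hard part is precisely this positivity, because it forces me to be careful about connectivity: I would restrict $\modW$ to the non-sink nodes, argue that this block is irreducible so that its Perron eigenvector is \emph{strictly} positive (giving $v_i>0$ for all $i\neq\sink$), and then use connectivity to exhibit at least one such $i$ with $P_{i,\sink}>0$. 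The delicate point, which I would flag as the main obstacle, is that irreducibility of the non-sink block is exactly the hypothesis that makes ``the left Perron eigenvector'' well defined and positive, and it can fail in degenerate configurations (for instance when the sink is a cut vertex or when every edge is incident to the sink, where $\grho=0$); handling these requires the standing connectivity assumption on $\graph$ to be used in its stronger form on $\graph$ minus the sink.
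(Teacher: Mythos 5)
Your derivation of the exact formula is correct and is essentially the paper's own computation in a different guise: the paper expands $\norm[1]{\vec v \modW} = \sum_{i,j} v_i \modW_{ij}$, writes $\modW_{ij} = \trans_{ij} - (\trans_{ij}-\modW_{ij})$ and uses that $\trans$ has unit row sums, which is the same as your two evaluations of $\vec v\,\modW\,\bone$. Your trace argument for $\grho>0$ (namely $\mathrm{tr}(\modW^2)>0$ whenever two non-sink nodes are adjacent) is a valid alternative to the paper's, which instead uses that $\modW$ is similar to the symmetric matrix $\modN = D^{-1/2}\gA D^{-1/2}$, so that $\grho=0$ would force $\modN$, and hence $\modW$, to be the zero matrix. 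Both arguments implicitly need at least one edge between two non-sink nodes, and your remark that $\grho=0$ when every edge meets the sink is accurate; on this point you are in fact more careful than the paper, whose assertion that a null $\modW$ ``is clearly not the case for any nonempty graph'' fails for a star with the sink at its center.

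The genuine gap is in your proof of $\grho<1$. Your route requires the non-sink block of $\modW$ to be irreducible, i.e.\ $\graph$ minus the sink to be connected, and you propose to handle its failure by strengthening the standing assumption. But a sink that is a cut vertex is not a degenerate configuration (take a path on five nodes with the sink in the middle): the lemma is still true there, and the paper assumes only that $\graph$ itself is connected. The paper closes this case with a propagation argument that needs no irreducibility at all: if $\grho=1$, the identity forces $\sum_{i} v_i P_{i,\sink} = 0$, hence $v_i=0$ for every neighbour $i$ of the sink; next, for any $j\neq\sink$ with $v_j=0$, the $j$-th entry of $\vec v\,\modW = \grho\,\vec v = \vec v$ is a sum of nonnegative terms equal to $v_j=0$, so $v_i=0$ for every neighbour $i$ of $j$ as well; iterating and using connectivity of $\graph$ yields $\vec v = \vec 0$, a contradiction. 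If you prefer your Perron-theoretic route, it can be repaired without any extra hypothesis by arguing blockwise: each connected component of $\graph$ minus the sink gives an irreducible block, each such component contains a neighbour of the sink because $\graph$ is connected, so your identity applied within the block gives that block's spectral radius strictly below $1$, and $\grho$ is the maximum over blocks. As written, however, your proposal leaves the cut-vertex case unproven, while the paper's lemma covers it.
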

\begin{proof}
    First of all, observe that $\modW$ is diagonalizable: if we let $\gA$ be
    the matrix obtained from $A$ by zeroing out the entries on the row and
    column corresponding to the sink node, then $\modW = D^{-1} \gA$, and thus
    $\modW$ is similar to the symmetric real matrix $\modN = D^{-1/2} \gA D^{-1/2}$. 

    Moreover, $\modW$ is nonnegative and the Perron-Frobenius theorem for
    nonnegative matrices (for example, see  \cite[Section 8.3]{Meyer:2000})
    guarantees the existence of a nonnegative {\em row vector} $\vec v$ such that
    $ \vec v \modW = \grho \vec v$.
    Without loss of generality, assume that $\norm[1]{\vec v}=1$. 

    Now observe that 
    \begin{align*}
        \grho &= \abs{\grho} = \norm[1]{\grho \vec v} 
        = \norm[1]{ \vec v \modW } = \sum_{i=1}^n \sum_{j=1}^n v_i \modW_{ij}. 
    \end{align*}

    Let $\eps_{ij} \defas \trans_{ij} - \modW_{ij} \ge 0$ be the nonnegative
    ``gaps'' between $\trans$ and $\modW$. Then we can continue,
    \begin{align}
        \grho &= \sum_{i=1}^n \sum_{j=1}^n v_i \modW_{ij} 
        = \sum_i \sum_j v_i (\modW_{ij} + \eps_{ij} - \eps_{ij}) \nonumber\\
        &= \sum_i \sum_j v_i \trans_{ij} - \sum_i \sum_j v_i \eps_{ij} 
        = \norm[1]{\vec v \trans } - \sum_i v_i \sum_j \eps_{ij} \nonumber\\
        &= 1 - \vec v \cdot \vec \gamma,\label{eq:specrad}
    \end{align}
    where $\vec \gamma$ is the vector of row gaps, i.e., $\gamma_i \defas \sum_j \eps_{ij}$, and for the last equality we also used the fact that every row of $\trans$ sums to 1. 
    Note that, for $i\neq \sink$, $\gamma_i$ is the same as $\trans_{i,\sink}$, that
    is, the probability that a token at node $i$ reaches the sink in one step.  

    We claim that the spectral radius is strictly positive. To see that, observe
    that, similarly to Fact \ref{fact:similar}, $\modW$ is
    similar to $\modN$ and thus shares the same eigenvalues. If the spectral
    radius $\grho$ were zero, $\modN$ would be a null matrix, i.e., 
    a matrix whose entries are all zeros
    (this follows by looking at the diagonalized form of $\modN$). 
    Since $\modW$ is similar to $\modN$, $\modW$ would be a null matrix as
    well, which is clearly not the case for any nonempty graph.     

    Observe also that $0 = \vec v \cdot \vec 0 = \vec v \modW \vec \chi_{\sink}  =
    \grho \vec v \cdot \vec \chi_{\sink}= \grho v_{\sink}$ and therefore $v_{\sink}=0$.
    To show that $\grho<1$, assume by contradiction that
    $\grho=1$. Then, $v_{\sink}=0$ and \eqref{eq:specrad} implies
    $\sum_{i\neq \sink} v_i \gamma_i =\sum_i v_i \gamma_i = 0$, i.e., $v_i=0$
    whenever $\gamma_i\neq 0$; in particular, $v_i=0$ for each node $i$
    adjacent to the sink. Continuing this argument would yield that $v_i=0$ for
    each $i$ adjacent to a node adjacent to the sink, (since the $i$th entry of
    $ \vec v \modW $ equals $\grho v_i$), and so on. Since the original
    graph is connected, this contradicts the fact that $\vec v \neq \vec 0$. 
\end{proof}

In the proof of next theorem, we make use of the following fact, which is analogous to Fact  \ref{fact:similar}.
\begin{fact}
    \label{fact:grundsimilar}
    The matrix $\modW$ is similar to
    the matrix $\modN$ obtained from $N$ by zeroing out its last column and row.
    In particular, $\modW^{t} = D^{-\frac 12} \modN^{t} D^{\frac 12}$, and
    $\modN$ has $n$ orthonormal eigenvectors $\modx_1, \dots, \modx_n$ which
    correspond to the eigenvalues $\grho_1 \ge \ldots \ge \grho_n$ of $\modW$.
\end{fact}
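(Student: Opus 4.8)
The plan is to mirror the argument used for Fact~\ref{fact:similar}, reusing the identities already extracted in the proof of Lemma~\ref{lem:specrad}. There we observed that $\modW = D^{-1}\gA$, where $\gA$ is $A$ with the sink (last) row and column zeroed out, and that $\modW$ is similar to the symmetric matrix $D^{-1/2}\gA D^{-1/2}$. The first thing I would do is verify that this symmetric matrix is precisely the matrix $\modN$ obtained from $N = D^{-1/2}AD^{-1/2}$ by zeroing out the sink row and column. This is a one-line entrywise check: since $D^{-1/2}$ is diagonal, conjugating by it scales entry $(i,j)$ by $(\vol(i)\vol(j))^{-1/2}$ without mixing entries, so \textit{zero the sink row/column of $A$, then conjugate} produces exactly the same matrix as \textit{conjugate $A$, then zero the sink row/column of $N$}. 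Hence $\modN = D^{-1/2}\gA D^{-1/2}$ under either description, and $\modN$ is symmetric because $\gA$ is symmetric (zeroing a matching row and column preserves symmetry) and $D^{-1/2}$ is diagonal.

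The similarity then follows by exactly the same factorization as in Fact~\ref{fact:similar}:
\[
    \modW = D^{-1}\gA
          = D^{-\frac12}\bigl(D^{-\frac12}\gA D^{-\frac12}\bigr)D^{\frac12}
          = D^{-\frac12}\modN D^{\frac12}.
\]
The power identity $\modW^{t} = D^{-\frac12}\modN^{t}D^{\frac12}$ is immediate by telescoping: in $(D^{-1/2}\modN D^{1/2})^{t}$ the internal factors $D^{1/2}D^{-1/2} = I$ cancel, leaving a single $D^{-1/2}$ on the left, $\modN^{t}$ in the middle, and $D^{1/2}$ on the right.

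For the spectral part I would invoke the spectral theorem: since $\modN$ is real and symmetric, it admits an orthonormal basis of eigenvectors $\modx_1,\dots,\modx_n$ with real eigenvalues, which I order as $\grho_1 \ge \dots \ge \grho_n$. Because $\modW$ and $\modN$ are similar, they share a characteristic polynomial and hence the same eigenvalues $\grho_1,\dots,\grho_n$; the eigenvectors of $\modW$ are the images of the $\modx_i$ under the similarity $D^{-1/2}$, exactly as in Fact~\ref{fact:similar}. By the Perron--Frobenius discussion in Lemma~\ref{lem:specrad}, the spectral radius $\grho$ is itself an eigenvalue, so it coincides with the top eigenvalue $\grho_1$; this is what makes the fact useful downstream.

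There is essentially no hard step here: the statement is the grounded-graph analogue of Fact~\ref{fact:similar}, and the only point requiring a moment's care is the commutation of \emph{zeroing the sink row and column} with \emph{conjugating by $D^{-1/2}$}, which holds solely because $D^{-1/2}$ is diagonal. Everything else is the spectral theorem together with the invariance of eigenvalues under similarity.
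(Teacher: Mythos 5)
Your proof is correct and takes essentially the same route as the paper, which states this fact without a separate proof as the grounded analogue of Fact~\ref{fact:similar}, the key similarity $\modW = D^{-1}\gA = D^{-\frac12}\bigl(D^{-\frac12}\gA D^{-\frac12}\bigr)D^{\frac12}$ having already been observed in the proof of Lemma~\ref{lem:specrad}. Your explicit check that zeroing the sink row and column commutes with conjugation by the diagonal matrix $D^{-1/2}$ (so that $\modN = D^{-1/2}\gA D^{-1/2}$ under either description) is precisely the detail the paper leaves implicit, and the rest (telescoping powers, spectral theorem, invariance of eigenvalues under similarity) matches the paper's argument for Fact~\ref{fact:similar}.
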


\begin{theorem}
    \label{thm:conv_of_grounded}
    The iterates of \eqref{eq:grounded} converge to a feasible solution of the
    linear system \eqref{eq:p_def}. The rate of convergence is proportional to 
    $\grho$. 
\end{theorem}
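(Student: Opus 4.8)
The plan is to treat \eqref{eq:tokendiff} as a classical stationary linear iteration $\vpot{t+1} = \modW\,\vpot{t} + \vec c$, with $\vec c \defas D^{-1}\modb$, and to invoke Lemma \ref{lem:specrad}. Since that lemma guarantees $\grho<1$, the value $1$ is not an eigenvalue of $\modW$, so $I-\modW$ is nonsingular and the iteration admits a unique fixed point $\vpot{*} \defas (I-\modW)^{-1}\vec c$, characterized by $(I-\modW)\vpot{*} = \vec c$. First I would subtract this fixed-point identity from the recurrence to get $\vpot{t} - \vpot{*} = \modW\,(\vpot{t-1} - \vpot{*})$, and unroll it into $\vpot{t} - \vpot{*} = \modW^{t}(\vpot{0} - \vpot{*})$.

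For the rate I would pass to the symmetric matrix via Fact \ref{fact:grundsimilar}: writing $\modW^{t} = D^{-1/2}\modN^{t}D^{1/2}$ and using that $\modN$ is symmetric (so its operator norm equals its spectral radius, which coincides with $\grho$), submultiplicativity of the norm yields
\[
    \norm{\vpot{t} - \vpot{*}} \le \sqrt{\volmax/\volmin}\,\grho^{t}\,\norm{\vpot{0} - \vpot{*}},
\]
exactly in the spirit of \eqref{eq:infamenonorthogonal}. Since $\grho<1$, the right-hand side tends to $0$, establishing convergence at a rate governed by $\grho$.

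The conceptual crux, and the reason a separate argument is needed beyond the Jacobi analysis, is to verify that the limit $\vpot{*}$ is genuinely feasible for the \emph{full} system \eqref{eq:p_def}, even though the sink row and column were zeroed out in passing from $\trans$ to $\modW$. Reading off \eqref{eq:grounded} at the fixed point, I would note that for every $u \neq \sink$ the relation $\vol(u)\,\pot{*}_u = \sum_{v\sim u} w_{uv}\,\pot{*}_v + b_u$ is precisely $(L\vpot{*})_u = b_u$, while $\pot{*}_{\sink}=0$. It then remains only to recover the sink equation. For this I would sum $(L\vpot{*})_u = b_u$ over $u \neq \sink$ and use $\bone\tp L = \vec 0\tp$ (equivalently $L\bone=\vec 0$ with $L$ symmetric) together with the global consistency $\sum_u b_u = 0$ (since $b_{\source}=1$, $b_{\sink}=-1$, and all other entries vanish); this forces $(L\vpot{*})_{\sink} = b_{\sink}$ as well. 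Hence $L\vpot{*} = \vec b$, so $\vpot{*}$ is the grounded Kirchhoff solution, and \eqref{eq:grounded} converges to it.

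I expect the main obstacle to be precisely this feasibility check: because the iteration is deliberately \emph{not} the Jacobi map, one cannot simply quote that its fixed points solve $L\vec p = \vec b$; the sink row must be restored by hand, and the argument hinges on the interaction between the Laplacian's kernel and the flow-balance condition $\sum_u b_u = 0$. The convergence and the $\grho^{t}$ rate, by contrast, are routine once Lemma \ref{lem:specrad} supplies $\grho<1$ and Fact \ref{fact:grundsimilar} supplies the symmetrization.
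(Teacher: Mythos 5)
Your proposal is correct and follows essentially the same route as the paper's proof: both invoke Lemma \ref{lem:specrad} to make $I-\modW$ invertible, identify the limit $(I-\modW)^{-1}D^{-1}\modb$, obtain the geometric rate $\grho^{t}$ (with the $\sqrt{\volmax/\volmin}$ factor) through the symmetrization of Fact \ref{fact:grundsimilar}, and recover the sink equation of \eqref{eq:p_def} from the kernel identity $\bone\tp L = \vec 0\tp$ combined with $\bone\tp\modb = 1$ (equivalently $\sum_u b_u = 0$). The only cosmetic differences are that the paper unrolls the iteration as a partial Neumann series, exploiting $\vpot{0}=\vec 0$, rather than as a fixed-point contraction from an arbitrary start, and phrases the sink-row recovery as a matrix identity in $\vec\chi_{\sink}\tp$ rather than as your (arguably cleaner) summation of the node equations over $u\neq\sink$.
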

\begin{proof}
    Since $\grho <1$, the matrix $I-\modW$ is invertible, and its inverse can be expressed as
    \[ (I - \modW)^{-1} = \sum_{k=0}^\infty \modW^k. \]
    If we recursively expand the updates
    $ \vpot{t+1} = \modW \vpot{t} + D^{-1} \modb, $
    we get, for any $t \ge 1$,
    \begin{equation}
        \vpot{t} = \modW^t \vpot{0} + \sum_{k=0}^{t-1} \modW^k D^{-1} \modb =
        \left( \sum_{k=0}^{t-1} \modW^k \right) D^{-1} \modb,
        \label{eq:Pfinitesum}
    \end{equation}
    where we used that in \eqref{eq:tokendiff} $\vpot{0} = 0$.
    As $t \to \infty$, this yields
    \begin{equation}
        \vpot{\infty} \defas \lim_{t \to \infty} \vpot{t} = (I-\modW)^{-1} D^{-1} \modb, 
        \label{eq:Pinfinitesum}
    \end{equation}
    which shows that in the limit, the iterates satisfy the linear system 
    $ (I-\modW) \vpot{\infty} = D^{-1} \modb, $
    or, recalling that $\modW \defas D^{-1} \gA$,
    \begin{equation}
        (D - \gA) \vpot{\infty} = \modb.
        \label{eq:groundconv}
    \end{equation}
    To conclude that $\vpot{\infty}$ also satisfies the original system
    \eqref{eq:p_def}, notice the following. The two matrices $L = D-A$ and $D-\gA$,
    as well as the two vectors $\vec b$ and $\modb$, differ only in the row
    corresponding to the sink node, and the difference of the
    $sink$-th rows of the two matrices is given by the row vector
    \[ 
        \vec\chi_{\sink}\tp ((D - \gA) - L ) = \\
        \bone\tp (A - \gA ) = \bone\tp (D - \gA),
    \]
    where in the last equality we used $\bone\tp (D-A)=\vec 0\tp$. 
    Therefore, using \eqref{eq:groundconv}, 
    \begin{multline*}
        \vec\chi_{\sink} \tp 
        (D - \gA) \vpot{\infty} - \vec\chi_{\sink}\tp (D-A) \vpot{\infty} \\
        = \bone\tp(D-\gA) \vpot{\infty} 
        = \bone\tp \modb = 1 = \chi_{sink}\tp (\modb - \vec b),
    \end{multline*}
    which, after rearranging terms, together with \eqref{eq:groundconv} implies that 
    \[
        0 = \vec\chi_{\sink}\tp ((D - \gA) \vpot{\infty} - \modb) = \vec\chi_{\sink}\tp ((D- A)
        \vpot{\infty} - \vec b). 
    \]
    This proves that $(D-A)\vpot{\infty}= L\vpot{\infty}=\vec b$. 

    The second part of the theorem follows from Fact \ref{fact:grundsimilar},
    \eqref{eq:Pfinitesum} and \eqref{eq:Pinfinitesum}, which yield
    \begin{multline*}
        \|\vpot t - \vpot \infty\| 
        = \left\| \left( \sum_{k=t}^{\infty} \modW^k \right) D^{-1} \modb \right\|
        \leq \sum_{k=t}^{\infty} \left\| \modW^k  D^{-1} \modb \right\|\\
        \stackrel{(a)}{ \leq } \sqrt{\frac{\volmax}{\volmin}} 
            \frac {\sum_{k=t}^{\infty} \grho^k}{\vol(source)} 
        =\sqrt{\frac{\volmax}{\volmin}} 
        \frac { \grho^t}{(1-\grho)} \frac 1{\vol(source) },
    \end{multline*}
    where in $(a)$ we performed a calculation analogous to \eqref{eq:infamenonorthogonal}.
\end{proof}

\subsection{Convergence rate of token diffusion}
\label{se:speed_tok}

In Section \ref{subse:diffu-apx} we showed that the token diffusion system  
converges to one of the solutions of Kirchhoff's equations. Moreover, 
its rate of convergence is dictated by the spectral radius of the
transition matrix $\modW = D^{-1} \gA$ (Theorem 
\ref{thm:conv_of_grounded}), which is similar to the original transition matrix $\trans$, except for the fact
that all entries of the row and column corresponding to the sink are equal to
$0$ in $\modW$. 

For simplicity of exposition, in the remainder we simply remove the 
row and column of $\modW$ corresponding to the sink. Assume without loss of 
generality that the sink corresponds to the $n$-th row/column index. 
Given an $n\times n$ matrix $M$, consider the $(n-1)\times(n-1)$ matrix 
$\grM$, obtained from $M$ by {\em grounding} the $n$-th index, that is, 
removing the $n$-th row and $n$-th column. The next fact shows that 
this operation does not affect the spectral radius of $\modW$.

\begin{proposition}\label{fa:grounded}
Assume $M$ is an $n \times n$ matrix where each entry of the $n$-th row and of the $n$-th column is zero. Then: 
\begin{enumerate}
\item for every eigenpair $((x_1,\ldots,x_{n-1}), \mu)$ of $\grM$ there is an eigenpair $((x_1,\ldots,x_{n-1},0), \mu)$ of $M$; 
\item $(\chi_n, 0)$ is an eigenpair of $M$; 
\item the spectral radius of $M$ and $\grM$ is the same. 
\end{enumerate}
\end{proposition}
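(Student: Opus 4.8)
The plan is to exploit the block structure forced by the hypothesis. Since the $n$-th row and the $n$-th column of $M$ vanish, $M$ decomposes as
\[
    M = \begin{pmatrix} \grM & \vec 0 \\ \vec 0\tp & 0 \end{pmatrix},
\]
and all three claims follow from elementary computations with this form.

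For claim (1), I would take any eigenpair $(\vec x, \mu)$ of $\grM$ with $\vec x = (x_1,\dots,x_{n-1})\tp \neq \vec 0$, and verify directly that the padded vector $\hat{\vec x} = (x_1,\dots,x_{n-1},0)\tp$ satisfies $M\hat{\vec x} = \mu\hat{\vec x}$. Checking the first $n-1$ coordinates, the trailing zero of $\hat{\vec x}$ never contributes (and even if it did, the $n$-th column of $M$ is zero), so $(M\hat{\vec x})_i = (\grM\vec x)_i = \mu x_i$ for $i \le n-1$; the $n$-th coordinate gives $(M\hat{\vec x})_n = 0 = \mu \cdot 0$ because the $n$-th row of $M$ is zero. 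Since $\vec x \neq \vec 0$ forces $\hat{\vec x} \neq \vec 0$, this is a genuine eigenpair. Claim (2) is then immediate: $M\chi_n$ equals the $n$-th column of $M$, which is zero, so $M\chi_n = 0\cdot\chi_n$.

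For claim (3), I would pass to the characteristic polynomial. Because $M$ is block diagonal in the sense above, so is $\lambda I_n - M$, whence
\[
    \det(\lambda I_n - M) = \lambda\,\det(\lambda I_{n-1} - \grM).
\]
Thus the multiset of eigenvalues of $M$ is exactly that of $\grM$ together with one additional eigenvalue equal to $0$. Taking maximum moduli and noting that a spectral radius is a maximum of nonnegative quantities, appending $\abs{0}=0$ cannot raise the maximum, so the spectral radii of $M$ and $\grM$ coincide.

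I do not anticipate a genuine obstacle; the computation is routine. The only point deserving a little care is claim (3): claims (1) and (2) already give one inclusion of the spectra, and it is the characteristic-polynomial factorization that supplies the reverse inclusion and hence equality, while the nonnegativity of the spectral radius rules out the extra eigenvalue $0$ enlarging it.
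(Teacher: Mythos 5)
Your proof is correct, and for claims (1) and (2) it coincides with the paper's argument: both verify directly that padding an eigenvector of $\grM$ with a trailing zero yields an eigenvector of $M$ (using that the $n$-th row and column of $M$ vanish), and that $M\vec\chi_n=\vec 0$. The genuine difference is in claim (3). The paper disposes of it by asserting that it is ``a direct consequence of the first two''; taken literally, points (1) and (2) only give the inclusion $\mathrm{spec}(\grM)\cup\{0\}\subseteq\mathrm{spec}(M)$, hence only the inequality $\rho(\grM)\le\rho(M)$, and equality additionally needs the fact that $M$ has no eigenvalues outside $\mathrm{spec}(\grM)\cup\{0\}$. Your factorization
\[
\det(\lambda I_n - M) \;=\; \lambda\,\det\bigl(\lambda I_{n-1} - \grM\bigr),
\]
read off from the block-diagonal form, supplies exactly this reverse inclusion, and your closing observation that appending the eigenvalue $0$ cannot raise a (nonnegative) maximum of moduli finishes the job. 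So your route is the same for (1)--(2) but more complete for (3): it makes explicit the step the paper leaves implicit, and it does so for an arbitrary $M$ satisfying the hypothesis (no diagonalizability needed, even though the paper's intended application $M=\modW$ happens to be diagonalizable). An equivalent, determinant-free way to get the same reverse inclusion: if $(\vec y,\mu)$ is an eigenpair of $M$ with $\mu\neq 0$, the $n$-th coordinate of $M\vec y=\mu\vec y$ forces $y_n=0$, and the remaining coordinates then exhibit $\mu$ as an eigenvalue of $\grM$.
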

\begin{proof}
Point (1) follows from the assumption that the $n$-th column of $M$ is
identically zero and thus the $i$-th entry of $M (x_1, \ldots,$ $ x_{n-1}, 0)$ is
equal to the $i$-th entry of $\grM (x_1,\ldots,x_{n-1})$ for any
$i<n$. 
Point (2) follows from $M \chi_n = \vec 0$. 
Point (3) is a direct consequence of the first two. 
\end{proof}

Since $\modW$ satisfies the hypothesis of Proposition \ref{fa:grounded}, we can
equivalently study the spectral radius of $ \grmodP$ $( = \grP)$. 
To simplify (and with a slight abuse of) notation, in the remainder of
this section we write $\underline{M}$ for $\mathrm{gr}(M)$.  

\paragraph{Additional notation.}
We denote by $\modG$ the graph obtained from $\graph$ by removing the sink node and its incident edges. 
We denote by $\glap$ the Laplacian matrix of $\modG$, so that $\glap = \gD - \modA$, with 
$\modA$ and $\gD$ respectively the adjacency and degree matrices of $\modG$. 
We also define $\modlap \defas \modD - \gA$. 
Note that $(\modA)_{ij} = (\underline{A})_{ij}$ for each $i,j<n$. 
On the other hand $\modlap$ is not a proper graph Laplacian, since 
$(\gD)_{ii} \ne (\modD)_{ii}$ for some $i<n$.\footnote{Precisely, this 
happens whenever $i\sim n$ in the original graph $\graph$.}
However, $\modlap$ can be viewed as a \emph{perturbed} Laplacian, since $\modD = \gD + \smat$, where $\smat = \text{diag}(\w_{1,n},\ldots , \w_{n-1,n})$. 

The rate of convergence of the token diffusion process is dictated by
$\underline{\rho}$, the dominant eigenvalue of the matrix $\modW \defas D^{-1} \gA =
\modD^{-1} \gA$.
Thanks to Fact \ref{fact:grundsimilar}, we can equivalently study the matrix
$\modN = \modD^{-1/2}\,\gA\,\modD^{-1/2}$, which shares the same spectrum as
$\modW$, or, equivalently, the matrix $\nglap = I - \modN =
\modD^{-1/2}\modlap\modD^{-1/2} = \modD^{-1/2}(\glap + \smat)\modD^{-1/2}$.
Again, the matrix $\nglap$ can be interpreted as a perturbed normalized Laplacian. 
The eigenvalue $\grho$ of $\modN$ corresponds to an eigenvalue $\glam =
1-\grho$ of $\nglap$.
%

\medskip
In this section we provide a lower bound on $\eigr$. Let 
$\bx$ denote the (unit norm) eigenvector of $\nglap$ corresponding to 
$\eigr$ and let $\by = \modD^{-1/2}\bx$. Since $\nglap$ is symmetric, we have by definition:
\begin{equation}\label{eq:gen_exp_eigr}
\begin{split}
	\eigr = \bx\tp\nglap\bx = \bx\tp\modD^{-1/2}\modlap\modD^{-1/2}\bx = \\
    \bx\tp\modD^{-1/2}(\glap + \smat)\modD^{-1/2}\bx = \by\tp(\glap + 
	\smat)\by,
\end{split}
\end{equation}

\begin{proposition}\label{fa:norm_y}
The following holds:
\begin{equation}\label{eq:norm_y}
	\frac 1{\vol_{\max}} \le\|\by\|^2\le \frac 1{\vol_{\min}}.
\end{equation}
\end{proposition}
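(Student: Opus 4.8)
The plan is to reduce the statement to an elementary Rayleigh-quotient sandwich by rewriting $\|\by\|^2$ as a quadratic form in $\bx$. Since $\modD^{-1/2}$ is a symmetric (diagonal) matrix and $\by = \modD^{-1/2}\bx$, I would first observe that
\[
    \|\by\|^2 = \by\tp\by = \bx\tp \modD^{-1/2}\modD^{-1/2}\bx = \bx\tp\modD^{-1}\bx.
\]
Because $\modD$ is the diagonal matrix whose $i$-th diagonal entry is $\vol(i)$, the volume of node $i$ in the \emph{original} graph $\graph$, this expands coordinatewise (the index $i$ ranging over the non-sink nodes, which are exactly the coordinates of $\bx$ after grounding) into
\[
    \|\by\|^2 = \sum_{i} \frac{x_i^2}{\vol(i)}.
\]

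Next I would invoke the normalization $\|\bx\|=1$, i.e. $\sum_i x_i^2 = 1$, together with the trivial bounds $\vol_{\min} \le \vol(i) \le \vol_{\max}$, which hold for every node of $\graph$. These give $\tfrac{1}{\vol_{\max}} \le \tfrac{1}{\vol(i)} \le \tfrac{1}{\vol_{\min}}$ for each $i$; multiplying by $x_i^2 \ge 0$ and summing then yields
\[
    \frac{1}{\vol_{\max}} = \frac{1}{\vol_{\max}}\sum_i x_i^2 \le \sum_i \frac{x_i^2}{\vol(i)} \le \frac{1}{\vol_{\min}}\sum_i x_i^2 = \frac{1}{\vol_{\min}},
\]
which is precisely \eqref{eq:norm_y}.

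The computation is short, so there is no serious obstacle; the one point that genuinely requires care is the identification of the diagonal of $\modD$ with the \emph{global} volumes $\vol(i)$. One must not confuse $\modD = \modD$ (the grounding of the original degree matrix $D$, which retains the full-graph volumes) with $\gD$ (the degree matrix of the reduced graph $\modG$, whose entries are strictly smaller for nodes adjacent to the sink). It is exactly because $\modD$ keeps the original volumes that the global quantities $\vol_{\min}$ and $\vol_{\max}$ are the correct bounding constants; everything else is the one-line sandwich above.
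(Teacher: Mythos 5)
Your proof is correct and follows essentially the same route as the paper's: expand $\|\by\|^2 = \|\modD^{-1/2}\bx\|^2 = \sum_i \bx_i^2/\vol(i)$ coordinatewise and sandwich using $\|\bx\|^2=1$ together with $\vol_{\min}\le\vol(i)\le\vol_{\max}$. Your cautionary remark that $\modD$ retains the full-graph volumes (unlike $\gD$, the degree matrix of $\modG$) is also exactly the identification the paper relies on, so there is nothing to add.
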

\begin{proof}
We have:
\[
	\|\by\|^2 = \|\modD^{-1/2}\bx\|^2 = 
    \sum_{i=1}^{n-1} \frac{\bx_i^2}{\vol(i)}.
\]
The claim then follows immediately since $\|\bx\|^2 = 1$.
\end{proof}
In the remainder, we decompose $\by$ as $\by = \bu + \bz$, with $\bu$ 
and $\bz$ the components of $\by$ respectively parallel and orthogonal 
to the vector $\bone$. The next fact highlights a general property of the perturbed 
Laplacian matrix that affords a simplification of \eqref{eq:gen_exp_eigr}. 
\begin{proposition}\label{fa:gen_decomposition}
For any $\by \in \Real^{n-1}$, it holds $ \by\tp \glap \by = \bz\tp\glap\bz$, where
$\bz = \by - ((\bone\tp \by)/(\bone\tp \bone)) \cdot \bone$ is the component of $\by$ orthogonal to $\bone$.
\end{proposition}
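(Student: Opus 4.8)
The plan is to exploit the single structural fact that distinguishes the genuine Laplacian $\glap$ of $\modG$ from the perturbed matrix $\modlap$ appearing in \eqref{eq:gen_exp_eigr}: namely, that $\glap$ annihilates the all-ones vector. Concretely, since each diagonal entry $(\gD)_{ii}$ equals the total weight of the edges of $\modG$ incident to $i$, and this equals $\sum_j (\modA)_{ij}$, every row of $\glap = \gD - \modA$ sums to zero; hence $\glap \bone = \vec 0$, where $\bone \in \Real^{n-1}$. This is the only property of $\glap$ I would use.

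Given this, I would decompose $\by = \bu + \bz$ exactly as in the statement, with $\bu = ((\bone\tp\by)/(\bone\tp\bone))\,\bone$ the component parallel to $\bone$ and $\bz$ the orthogonal component, and then expand the quadratic form by bilinearity:
\[
    \by\tp \glap \by
        = \bu\tp\glap\bu + \bu\tp\glap\bz + \bz\tp\glap\bu + \bz\tp\glap\bz.
\]
Because $\bu$ is a scalar multiple of $\bone$, we have $\glap\bu = \vec 0$, which immediately kills the first and third terms. For the remaining cross term I would invoke the symmetry of $\glap$ (it is $\gD - \modA$ with $\modA$ symmetric) to write $\bu\tp\glap\bz = (\glap\bu)\tp\bz = \vec 0\tp\bz = 0$. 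Only $\bz\tp\glap\bz$ survives, which is precisely the claimed identity.

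There is no genuine obstacle here; the proposition is a routine consequence of $\bone \in \ker\glap$ together with the symmetry of $\glap$. The one point worth flagging is that this simplification applies to $\glap$ alone and \emph{not} to the full perturbed matrix $\modlap = \glap + \smat$: since $\smat$ is a nonzero diagonal matrix, it does not annihilate $\bone$, so the $\bone$-component of $\by$ contributes to $\by\tp\smat\by$. This is exactly why, when one later returns to the decomposition of $\eigr = \by\tp(\glap+\smat)\by$ from \eqref{eq:gen_exp_eigr}, it is only the $\glap$-part of the quadratic form that collapses to its orthogonal component, and the $\smat$-part must be handled separately.
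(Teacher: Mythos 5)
Your proof is correct and takes essentially the same route as the paper's: both decompose $\by = \bu + \bz$, expand the quadratic form by bilinearity, and eliminate every term involving $\bu$ because $\bu$ is parallel to $\bone \in \ker\glap$. Your write-up is merely more explicit than the paper's, which compresses the row-sum justification of $\glap\bone = \vec 0$ and the appeal to symmetry for the cross term $\bu\tp\glap\bz$ into the single remark that $\bu$ is in the kernel of $\glap$ by construction.
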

\begin{proof}
    If $\bu \defas \by - \bz$, note that $\bu$ is parallel to $\bone$, so
    \begin{align*}
        \by\tp \glap  \by 
        & = (\bu + \bz)\tp \glap (\bu + \bz)\\
        & = \bu\tp\glap\bu + \bu\tp\glap\bz + \bz\tp\glap\bu + \bz\tp\glap\bz \\
        & \stackrel{(a)}{ = } \bz\tp\glap\bz,
    \end{align*}
    where $(a)$ follows since $\bu$ is in the kernel of $\glap$ by 
    construction.
\end{proof}
 
We can now give a lower bound on $\eigr$, in terms of $\by$ and $\bz$. 
\begin{lemma}\label{le:lb_gen_eigr}
The following holds:
\begin{equation}\label{eq:lb_gen_eigr}
	\eigr\ge\by\tp\smat\by + \reig_2\|\bz\|^2,
\end{equation}
where $\reig_2$ is the second smallest eigenvalue of $\glap$, the  
Laplacian of the graph $\modG$.
\end{lemma}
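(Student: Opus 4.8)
The plan is to separate the quadratic form defining $\eigr$ into a Laplacian part and a perturbation part, and then bound the Laplacian part from below via the spectrum of $\glap$. Starting from the expression \eqref{eq:gen_exp_eigr}, I would write
\[
    \eigr = \by\tp(\glap + \smat)\by = \by\tp\glap\by + \by\tp\smat\by.
\]
The term $\by\tp\smat\by$ matches exactly the first summand in the target bound \eqref{eq:lb_gen_eigr}, so the whole lemma reduces to showing $\by\tp\glap\by \ge \reig_2\|\bz\|^2$.

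Next I would apply Proposition \ref{fa:gen_decomposition} to discard the component of $\by$ parallel to $\bone$. Since $\bone$ is in the kernel of the graph Laplacian $\glap$, that proposition gives $\by\tp\glap\by = \bz\tp\glap\bz$, where $\bz$ is the projection of $\by$ onto the orthogonal complement of $\bone$. This collapses the goal to the single scalar inequality $\bz\tp\glap\bz \ge \reig_2\|\bz\|^2$.

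To prove this last inequality I would invoke the Courant--Fischer (min--max) characterization of the eigenvalues of the symmetric positive semidefinite matrix $\glap$. As a graph Laplacian, $\glap$ has smallest eigenvalue $\reig_1 = 0$ with eigenvector $\bone$. Because $\bz$ is orthogonal to $\bone$ by construction, the Rayleigh quotient $\bz\tp\glap\bz / \|\bz\|^2$ is at least the second smallest eigenvalue $\reig_2$, which is precisely the claimed inequality.

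The one subtlety worth checking is the connectivity of $\modG$. If $\modG$ is connected, then $\ker \glap = \mathrm{span}(\bone)$, so orthogonality of $\bz$ to $\bone$ means orthogonality to the entire $0$-eigenspace and Courant--Fischer applies directly. If instead removing the sink disconnects the graph, then $\reig_2 = 0$, and the inequality $\bz\tp\glap\bz \ge 0 = \reig_2\|\bz\|^2$ holds trivially because $\glap$ is positive semidefinite. The bound therefore holds in both cases, and this case split is really the only place where any care is needed.
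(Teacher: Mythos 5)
Your proof is correct and follows essentially the same route as the paper's: split $\eigr = \by\tp\smat\by + \bz\tp\glap\bz$ via \eqref{eq:gen_exp_eigr} and Proposition \ref{fa:gen_decomposition}, then lower-bound the Rayleigh quotient of $\bz$ by $\reig_2$ using the variational characterization over vectors orthogonal to $\bone$. Your closing case analysis on the connectivity of $\modG$ is harmless but not actually needed, since $\min_{\bw\perp\bone,\ \|\bw\|=1}\bw\tp\glap\bw = \reig_2$ holds for any graph Laplacian: in the variational characterization, orthogonality to a single unit eigenvector of the smallest eigenvalue already forces the quotient up to the second eigenvalue, whatever the multiplicity of $0$.
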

\begin{proof}
    From \eqref{eq:gen_exp_eigr} and Proposition \ref{fa:gen_decomposition}, we have
    \begin{align*}
        &\eigr = \by\tp(\glap + 	\smat)\by = \by\tp\smat\by + \bz\tp\glap\bz = 
        \by\tp\smat\by + \frac{\bz\tp\glap\bz}{\|\bz\|^2}\|\bz\|^2\\
        &\ge\by\tp\smat\by + \|\bz\|^2\min_{\stackrel{\bw\perp\bone}{\|\bw\| 
        = 1}}\bw\tp\glap\bw = \by\tp\smat\by + \reig_2\|\bz\|^2,
    \end{align*}
    where the inequality follows by recalling that $\bz\perp\bone$ by 
    definition, and by observing that the second term of the sum is 
    the Rayleigh quotient associated to $\glap$, multiplied by $\|\bz^2\|$.
\end{proof}

We can now prove the main result of this section.
\begin{theorem}\label{thm:min_lambda}
    The following holds:
    \begin{equation*}
        \eigr\ge 
        \frac{\reig_2}{2\vol_{\max}(n - 1)}\sum_i\frac{\w_{in}}{\w_{in} + 
        \reig_2}.  
    \end{equation*}
\end{theorem}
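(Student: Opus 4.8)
The plan is to build directly on Lemma~\ref{le:lb_gen_eigr}, which already reduces the task to lower-bounding $\by\tp\smat\by + \reig_2\|\bz\|^2$, and then to extract the stated bound by trading off its two summands against each other. Since $\smat = \mathrm{diag}(\w_{1,n},\ldots,\w_{n-1,n})$, the first summand is $\by\tp\smat\by = \sum_i \w_{in}\by_i^2$. I would then use the decomposition $\by = \bu + \bz$ already in play, with $\bu = \bar y\,\bone$ the component parallel to $\bone$ (so $\bar y = (\bone\tp\by)/(n-1)$) and $\bz$ the orthogonal component, giving $\by_i = \bar y + \bz_i$ and $\|\bz\|^2 = \sum_i \bz_i^2$. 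The guiding idea is that the two summands compete: the first grows when $\by$ is large in the $\w_{in}$-weighted sense, the second when the orthogonal part $\bz$ is large, so a coordinatewise optimization should expose their combined strength.

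The main computational step is a per-coordinate minimization. For fixed $\bar y$, the convex quadratic $\bz_i \mapsto \w_{in}(\bar y + \bz_i)^2 + \reig_2\bz_i^2$ is minimized at $\bz_i = -\w_{in}\bar y/(\w_{in}+\reig_2)$, with minimum value $\bar y^2\,\w_{in}\reig_2/(\w_{in}+\reig_2)$. Because the actual $\bz_i$ can only make each term larger, summing over $i$ yields
\[
    \eigr \ge \sum_i \w_{in}\by_i^2 + \reig_2\|\bz\|^2 \ge \bar y^2\,\reig_2 \sum_i \frac{\w_{in}}{\w_{in}+\reig_2}.
\]
Independently, Lemma~\ref{le:lb_gen_eigr} also gives the crude estimate $\eigr \ge \reig_2\|\bz\|^2$ on its own. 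These are the two inequalities I would then play off against one another.

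It remains to guarantee that at least one of $\bar y^2$ and $\|\bz\|^2$ is bounded below, and here I would invoke Proposition~\ref{fa:norm_y}, $\|\by\|^2 \ge 1/\vol_{\max}$, together with the orthogonal split $\|\by\|^2 = (n-1)\bar y^2 + \|\bz\|^2$. Thus either $(n-1)\bar y^2 \ge 1/(2\vol_{\max})$ or $\|\bz\|^2 \ge 1/(2\vol_{\max})$. In the first case the $\bar y^2$-estimate already delivers $\eigr \ge \frac{\reig_2}{2\vol_{\max}(n-1)}\sum_i \frac{\w_{in}}{\w_{in}+\reig_2}$. In the second case, $\eigr \ge \reig_2\|\bz\|^2 \ge \reig_2/(2\vol_{\max})$; since each factor $\w_{in}/(\w_{in}+\reig_2)\le 1$ and there are $n-1$ of them, we have $\sum_i \w_{in}/(\w_{in}+\reig_2) \le n-1$, so $\reig_2/(2\vol_{\max})$ dominates the target as well. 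Combining the two cases closes the argument.

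The step I expect to be most delicate is recognizing that no single global estimate suffices: the Rayleigh-quotient eigenvector $\by$ may be nearly orthogonal to $\bone$, forcing $\bar y \approx 0$ and collapsing the first estimate to a vacuous bound. The case split on whether the mass of $\by$ lives in its $\bone$-component or its orthogonal complement is precisely what rescues the argument, and getting the per-coordinate minimizer right—so that the harmonic-type factor $\w_{in}/(\w_{in}+\reig_2)$ emerges—is the one genuine calculation. (For the degenerate case $\reig_2 = 0$ the claimed bound is trivially $\eigr \ge 0$, so I would treat $\reig_2 > 0$ throughout.)
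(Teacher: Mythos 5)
Your proposal is correct and follows essentially the same route as the paper's own proof: both invoke Lemma~\ref{le:lb_gen_eigr}, minimize the quadratic $(\bu+\bz)\tp\smat(\bu+\bz)+\reig_2\|\bz\|^2$ coordinatewise in $\bz$ to produce the factor $\w_{in}/(\w_{in}+\reig_2)$, and then split into cases via Proposition~\ref{fa:norm_y} according to whether the mass of $\by$ sits in its $\bone$-parallel or orthogonal component, observing finally that the stated bound is the weaker of the two. Your per-coordinate convexity argument is just a cleaner packaging of the paper's Hessian computation, and your explicit handling of the degenerate case $\reig_2=0$ is a small bonus the paper glosses over.
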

\begin{proof}
    Since $\bu$ is the component of $\by$ parallel to $\bone$, we have $\bu_i^2 = 
    u^2/(n - 1)$ for some $u \in \Real$.
    Given $\bu$, we compute the vector $\bz$ that minimizes the 
    right-hand side of \eqref{eq:lb_gen_eigr}. Let $f(\bz) = (\bu + 
    \bz)\tp\smat(\bu + \bz) + \reig_2\|\bz\|^2$, where $\bu$ is regarded as a constant. We have:
    \begin{align*}
        &\frac{\partial f}{\partial\bz_i} = 2\w_{in}(\bu_i + \bz_i) + 
        2\reig_2\bz_i,\\
        &\frac{\partial^2f}{\partial\bz_i\partial\bz_j} = 0,\, i\ne j,\\
        &\frac{\partial^2f}{\partial\bz_i^2} = 2\w_{in} + 2\reig_2.
    \end{align*}
    Since $\w_{in}, \reig_2 > 0$ (the latter following since $\reig_2$ is 
    the second eigenvalue of a Laplacian matrix), the determinant of the 
    Hessian matrix is positive, hence $f(\bz)$ has a global minimum that is 
    the critical point. If we set the $i$-th first-order partial derivative 
    to $0$ we obtain
    $
        \bz_i = - \w_{in}\bu_i / (\w_{in} + \reig_2).
    $
    Substituting back into $f(\bz)$ yields:
    \begin{align}
        &f(\bz) = \sum_i\w_{in}\left(\bu_i - \frac{\w_{in}\bu_i}{\w_{in} + 
        \reig_2}\right)^2 + \reig_2\sum_i\left(\frac{\w_{in}\bu_i}{\w_{in} 
        + \reig_2}\right)^2\nonumber\\
        & = \reig_2^2\sum_i\frac{\w_{in}\bu_i^2}{(\w_{in} + \reig_2)^2} + 
        \reig_2\sum_i\left(\frac{\w_{in}\bu_i}{\w_{in} + 
        \reig_2}\right)^2\nonumber\\
        &= \frac{\reig_2u^2}{n - 1}\sum_i\frac{\w_{in}}{\w_{in} + 
        \reig_2}\label{eq:lb}
    \end{align}
    Next, recall that $\|\bu\|^2 + \|\bz\|^2\ge {1/\vol_{\max}}$ from Proposition 
    \ref{fa:norm_y}. We consider two cases. If 
    $\|\bz\|^2\ge 1 /(2\vol_{\max})$, then we have:
    \begin{equation}
    \label{eq:bound1}
        \eigr\ge\frac{\reig_2}{2\vol_{\max}}.
    \end{equation}
    Otherwise, $u^2\ge 1 / (2\vol_{\max})$, and \eqref{eq:lb} immediately gives
    \begin{equation}
    \label{eq:bound2}
        \eigr\ge\frac{\reig_2}{2\vol_{\max}(n - 1)}\sum_i\frac{\w_{in}}{\w_{in} + 
        \reig_2}.
    \end{equation}
    To conclude the proof, note that the latter bound is always the worse case, since $w_{in}/(w_{in}+\reig_2) \le 1$.
\end{proof}

\subsection{Stochastic accuracy of the estimator}
\label{subse:high}
In this subsection, we study how closely the estimator $\est{t}{K}(u)$ 
approximates its expected value, the potential $\pot{t}_u$. In a 
nutshell, we show the following: i) the larger $K$ (the 
independent parameter controlling the number of tokens injected per 
round), the higher the accuracy; ii) all the rest being equal, the 
higher the potential, the higher the accuracy.

\paragraph{Notation.} Starting at 
$t = 0$, we index tokens in increasing order of their release dates. In 
more detail, any token released {in the $(i+1)$-th round has an index in the 
interval $\{iK + 1,\ldots , (i+1)K\}$, with $i = 0, 1,\ldots$}, while the 
relative order of tokens released in the same round is irrelevant and 
arbitrary. 
The main result of this subsection is the following. 
\begin{theorem}\label{thm:high}
    For any given $K$, $0 < \epsilon, \delta < 1$, for every $t$ and for 
    every $u$, such that 
    $\pot{t}_u\ge\frac{3}{\epsilon^2K\vol(u)}\ln\frac{2}{\delta}$,
    Algorithms \ref{algo:rw_phys} and \ref{algo:estim} together provide an 
    $(\epsilon, \delta)$-approximation of $\pot{t}_u$.\footnote{A random 
    variable $X$ gives an $(\epsilon, \delta)$-approximation of a 
    non-negative quantity $Y$ if $\Prob{}{|X - Y| > \epsilon Y}\le\delta$.}
\end{theorem}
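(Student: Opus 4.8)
The plan is to observe that, once the node $u \neq \sink$ and the round $t$ are fixed, the token count $\ntok{t}{K}(u)$ is a \emph{sum of independent indicator variables}, which reduces the statement to a standard multiplicative Chernoff bound. Concretely, for each token $j$ released by the end of round $t$, let $\isthere{j}$ be the indicator of the event that token $j$ occupies $u$ at the end of round $t$; then $\ntok{t}{K}(u) = \sum_j \isthere{j}$. Because Algorithm \ref{algo:rw_phys} evolves the tokens as \emph{mutually independent} random walks and their release rounds are deterministic, each $\isthere{j}$ is a function of the trajectory of walk $j$ alone, so the $\isthere{j}$ are independent Bernoulli variables. (Tokens already absorbed at the sink contribute $0$ for any $u \neq \sink$, so including them in the sum is harmless; the sink itself is excluded from the statement, since $\pot{t}_{\sink}=0$ violates the hypothesis.)

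First I would record the mean. By Corollary \ref{cor:estim}, $\Expec{}{\est{t}{K}(u)} = \pot{t}_u$, hence $\mu \defas \Expec{}{\ntok{t}{K}(u)} = K\vol(u)\pot{t}_u$. I would then note the event equivalence
\[
\left\{ \abs{\est{t}{K}(u) - \pot{t}_u} > \epsilon \pot{t}_u \right\} = \left\{ \abs{\ntok{t}{K}(u) - \mu} > \epsilon \mu \right\},
\]
which is immediate from $\est{t}{K}(u) = \ntok{t}{K}(u)/(K\vol(u))$ together with $\mu = K\vol(u)\pot{t}_u$.

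Next I would apply the two-sided multiplicative Chernoff bound to the sum $\ntok{t}{K}(u)$ of independent Bernoulli variables: for $0 < \epsilon < 1$,
\[
\Prob{}{\abs{\ntok{t}{K}(u) - \mu} > \epsilon \mu} \le 2 \exp\!\left( -\frac{\epsilon^2 \mu}{3} \right).
\]
Substituting $\mu = K\vol(u)\pot{t}_u$, the right-hand side is at most $\delta$ exactly when $\epsilon^2 K \vol(u) \pot{t}_u / 3 \ge \ln(2/\delta)$, i.e. when $\pot{t}_u \ge \frac{3}{\epsilon^2 K \vol(u)} \ln\frac{2}{\delta}$, which is precisely the hypothesis. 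Combined with the event equivalence above, this yields the claimed $(\epsilon,\delta)$-approximation of $\pot{t}_u$.

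The only step demanding genuine care is the independence argument of the first paragraph: one must check that absorption at the sink does not couple distinct tokens (each walk is removed based only on its own trajectory) and that indexing tokens by release round introduces no dependence (release times are deterministic). Once independence is established, the remainder is a routine Chernoff calculation, with the constant $3$ and the factor $\ln(2/\delta)$ arranged to match the hypothesis exactly.
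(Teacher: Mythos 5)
Your proposal is correct and follows essentially the same route as the paper's proof: express $\ntok{t}{K}(u)$ as a sum of independent Bernoulli indicators over tokens, compute its mean $K\vol(u)\pot{t}_u$ via Corollary \ref{cor:estim}, apply the two-sided multiplicative Chernoff bound, and translate the deviation event for the count into the deviation event for the estimator. Your explicit verification that absorption at the sink does not couple distinct tokens (and that absorbed tokens contribute $0$) is a point the paper leaves implicit, but it does not change the argument.
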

\begin{remark}
    Given the statistical and node-wise nature of the counter estimator, there
    is a ``resolution'' limit for the minimum value of a potential that can be
    estimated with desired accuracy and confidence levels for a specific
    value of $K$. 
    This is a consequence of the law of large numbers (applied in
    the form of a Chernoff bound in our case). A similar issue would arise if
    we used a different estimator, e.g., one based on Tetali and Snell's approach.
    On the other hand, accuracy and confidence can be improved by increasing
    $K$.
    This leads to an equivalent way of
    expressing Theorem \ref{thm:high} in which, given the minimum potential
    value we want to estimate with given accuracy and confidence levels, we can
    compute the minimum $K$ that achieves the desired performance. More
    formally, an $(\epsilon, \delta)$-approximation of the potentials $\pot{t}_u$ greater
    than $\pot{t}_{\star}$ can be
    achieved by setting
    $K\ge\frac{3}{\epsilon^2 \pot{t}_{\star} \vol(u)}\ln\frac{2}{\delta}$. 
\end{remark}
\begin{proof}[Proof of Theorem \ref{thm:high}]
    Let $\isthere{t}_j(u) = 1$ if the $j$-th token is at 
    node $u$ at time $t$, $\isthere{t}_j(u) = 0$ otherwise. 
    From Corollary \ref{cor:estim} 
    we have:
    \begin{align*}
        K\, \vol(u) \, \pot{t}_u &= \ex[\ntok{t}{K}(u)] = 
        { \ex[\sum_{j=1}^{Kt} \isthere{t}_j(u)]}. 
    \end{align*}

    The $\isthere{t}(u)$'s are independent Bernoulli variables and the 
    expectation of their sum is $K\vol(u)\pot{t}_u$. Hence, a simple 
    application of the multiplicative Chernoff bound yields
    \begin{equation*}
        \begin{split}
            \Pr\left[\abs{\ntok{t}{K} - K\vol(u)\pot{t}(u)} > \epsilon 
            K\vol(u)\pot{t}(u)\right]\le \\ 
            \le 2e^{-\frac{\epsilon^2}{3}K\vol(u)\pot{t}(u)}\le\delta,
        \end{split}
    \end{equation*}
    whenever 
    $\pot{t}_u\ge\frac{3}{\epsilon^2K\vol(u)}\ln\frac{2}{\delta}$. 
    Finally, note that 
    \begin{equation*}
        \abs{\ntok{t}{K} - K\vol(u)\pot{t}_u} \le \epsilon 
        K\vol(u)\pot{t}_u\iff \abs{\est{t}{K}(u) - \pot{t}_u} \le \epsilon 
        \pot{t}_u.
    \end{equation*}
    directly from definitions. This completes the proof.
\end{proof}

\subsection{Time and message complexity}
The arguments from previous sections lead to the following conclusions about
the token diffusion process. 

\begin{theorem}
	The expected value of the estimator vector $\vec V_K^{(t)}$ 
	constructed by Algorithms \ref{algo:rw_phys} and \ref{algo:estim} 
	converges to the grounded solution $\vec p$ of the Kirchhoff 
	equations at a rate 
    \begin{equation}
        \label{eq:l2bound}
        O((1-\eigr)^t) = O\left(\left( 1 - \frac{\reig_2}{2n \,
            \vol_{\max}}\sum_{u\in \nodes, u\neq n}\frac{\w_{un}}{\w_{un} +
            \reig_2} \right)^t\right).
    \end{equation}
\end{theorem}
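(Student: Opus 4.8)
The plan is to assemble this statement from the three results already established, since the heavy analytic work has been done. The chain is: Corollary~\ref{cor:estim} identifies the expectation of the estimator with the deterministic iterate $\vpot{t}$; Theorem~\ref{thm:conv_of_grounded} shows $\vpot{t}$ converges to a Kirchhoff solution at a rate governed by $\grho$, the spectral radius of $\modW$; and Theorem~\ref{thm:min_lambda} converts the resulting bound on $\grho$ into the explicit combinatorial quantity appearing in \eqref{eq:l2bound}. Thus no new estimate is required, only the correct bookkeeping.

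First I would recall that by Corollary~\ref{cor:estim} we have $\ex[\vec V_K^{(t)}] = \vpot{t}$ entrywise, so it suffices to control $\|\vpot{t} - \vec p\|$, where $\vec p$ is the grounded Kirchhoff solution. By the definition \eqref{eq:grounded}, $\pot{t}_{\sink} = 0$ for every $t$, hence the limit $\vpot{\infty}$ produced by Theorem~\ref{thm:conv_of_grounded} also has zero sink entry; combined with the fact (proved there) that $\vpot{\infty}$ solves $L\vpot{\infty} = \vec b$, this pins down $\vpot{\infty} = \vec p$. Theorem~\ref{thm:conv_of_grounded} further gives
\[
    \|\vpot{t} - \vec p\| \le \sqrt{\tfrac{\volmax}{\volmin}}\,\frac{\grho^{t}}{(1-\grho)\,\vol(\source)},
\]
so that $\|\ex[\vec V_K^{(t)}] - \vec p\| = O(\grho^{t})$, the prefactor being an absolute constant once the graph is fixed.

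It then remains to translate $\grho$ into the stated bound. Here I would use the spectral correspondence of Fact~\ref{fact:grundsimilar}: since $\modW$ is similar to $\modN$ and $\nglap = I - \modN$, the Perron eigenvalue $\grho$ of $\modN$ (which, by Perron--Frobenius and Lemma~\ref{lem:specrad}, equals the spectral radius and lies in $(0,1)$) corresponds to the smallest eigenvalue $\eigr = 1-\grho$ of $\nglap$. Hence $\grho = 1-\eigr$ and $O(\grho^{t}) = O((1-\eigr)^{t})$. Finally, plugging in the lower bound of Theorem~\ref{thm:min_lambda},
\[
    \eigr \ge \frac{\reig_2}{2\volmax(n-1)}\sum_{u\neq n}\frac{\w_{un}}{\w_{un}+\reig_2},
\]
gives $1-\eigr \le 1 - \frac{\reig_2}{2\volmax(n-1)}\sum_{u\neq n}\frac{\w_{un}}{\w_{un}+\reig_2}$, and replacing $n-1$ by the larger $n$ only shrinks the subtracted term, yielding the cleaner expression in \eqref{eq:l2bound}. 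The only point demanding care---the ``main obstacle,'' such as it is---is the direction of the inequalities: one must check that the \emph{lower} bound on $\eigr$ becomes an \emph{upper} bound on the contraction factor $1-\eigr$, and that enlarging the denominator from $n-1$ to $n$ preserves the inequality in the direction needed for an $O(\cdot)$ statement rather than reversing it.
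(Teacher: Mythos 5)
Your proposal is correct and follows essentially the same route as the paper's own proof, which is just a terser assembly of the same three ingredients: Corollary~\ref{cor:estim} to identify $\ex[\vec V_K^{(t)}]$ with $\vpot{t}$, Theorem~\ref{thm:conv_of_grounded} for the rate $\grho^t = (1-\eigr)^t$, and Theorem~\ref{thm:min_lambda} for the explicit bound. The details you add---pinning down that the limit is the \emph{grounded} solution via $\pot{t}_{\sink}=0$, the correspondence $\grho = 1-\eigr$ through Fact~\ref{fact:grundsimilar}, and the check that the inequality directions survive replacing $n-1$ by $n$---are all correct and in fact make explicit what the paper leaves implicit.
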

\begin{proof}
    By Corollary \ref{cor:estim}, the expected value of the estimator vector at
    time $t$ is given by the vector $\vpot{t}$. 
    From the analysis in Section \ref{subse:diffu-apx}, we know that the rate
    of convergence of $\vpot{t}$ to $\vec p$ is dictated by $1-\eigr$, the
    spectral radius of the matrix $\modW$. The result thus follows from Theorem
    \ref{thm:min_lambda}. 
\end{proof}
Note that the right hand side in \eqref{eq:l2bound} is \emph{decreasing} with
$\reig_2$. Thus, any lower bound on $\reig_2$ yields an upper bound on the
right hand side in \eqref{eq:l2bound}. 
By recalling that $\reig_2$ is the second smallest eigenvalue of the graph
$\modG$, this allows to connect the error term in \eqref{eq:l2bound} to the
\emph{edge expansion} of $\modG$, since for any graph $\graph$ it is known
\cite[Theorem 2.2]{Berman:2000} that
\[
    \lambda_2(\graph) \ge \volmax - (\volmax^2 - \theta(\graph)^2)^{1/2}. 
\]
Thus, the higher the edge expansion of $\modG$, the higher $\reig_2$, and the
faster the convergence of the token diffusion process. 

We finally derive a bound on the expected message complexity of Algorithm \ref{algo:rw_phys}. 
\begin{proposition}
    As $t \to \infty$, the expected message complexity per round of Algorithm \ref{algo:rw_phys} is 
    $
    O(K \, n \, \vol_{\mathrm{max}} \cdot \energy)
    $,
    where $\energy \defas \vec p\tp L \vec p$ is the \emph{energy} of the electrical flow. 
\end{proposition}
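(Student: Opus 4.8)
The plan is to identify the per-round message count with the number of tokens in transit, translate this into potentials via Corollary~\ref{cor:estim}, pass to the limit, and finally bound the resulting sum using the energy identity together with the maximum principle. First I would observe that Step~1 of Algorithm~\ref{algo:rw_phys} moves every token currently residing at a non-sink node to exactly one neighbour: for each token at $u$, the probabilities $w_{uv}/\vol(u)$ over its neighbours $v$ sum to $1$, so precisely one message is emitted per token. Since the sink is emptied at the end of each round (Step~3), at the start of round $t+1$ the sink holds no tokens, and the number of messages sent in that round equals the total number of tokens present at the end of round $t$, namely $\sum_{u \in \nodes} \ntok{t}{K}(u)$.

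Next I would take expectations. By Corollary~\ref{cor:estim}, $\ex[\ntok{t}{K}(u)] = K\,\vol(u)\,\pot{t}_u$, so the expected number of messages in a round is $K \sum_{u} \vol(u)\,\pot{t}_u$. Letting $t \to \infty$ and invoking Theorem~\ref{thm:conv_of_grounded}, under which $\vpot{t}$ converges to the grounded solution $\vec p$, this expectation tends to $K \sum_{u} \vol(u)\, p_u$.

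It then remains to bound $\sum_u \vol(u)\, p_u$ by $O(n\,\volmax\,\energy)$, for which I would use two facts about the grounded solution. The first is the energy identity $\energy = \vec p\tp L \vec p = \vec p\tp \vec b = p_{\source} - p_{\sink} = p_{\source}$, which holds because $L \vec p = \vec b$ and $p_{\sink}=0$. The second is the maximum principle for electrical potentials: at each internal node $L \vec p = \vec b$ forces $p_u$ to be a convex combination of its neighbours' potentials, so the extrema are attained at the source and the sink, giving $0 = p_{\sink} \le p_u \le p_{\source} = \energy$ for every $u$. Combining these with $\vol(u) \le \volmax$ yields $\sum_u \vol(u)\, p_u \le \volmax \sum_u p_u \le n\,\volmax\,\energy$, which is exactly the claimed bound.

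The main obstacle is the final step: establishing the uniform upper bound $p_u \le p_{\source} = \energy$ rigorously via the maximum principle. Everything else is essentially bookkeeping --- identifying each message with a moving token and substituting the expectation furnished by Corollary~\ref{cor:estim} --- whereas the uniform bound on the potentials is what converts the purely combinatorial token count into the physical energy of the flow.
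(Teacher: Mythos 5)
Your proposal is correct and follows essentially the same route as the paper's proof: identify expected messages per round with the expected token count, apply Corollary~\ref{cor:estim} and the limit $\vpot{t} \to \vec p$, then bound $\sum_u \vol(u)\, p_u \le n\,\volmax\, p_{\source}$ using $0 = p_{\sink} \le p_u \le p_{\source}$ and the identity $p_{\source} = \vec p\tp L \vec p = \energy$. The only difference is that you explicitly justify the bound $p_u \le p_{\source}$ via the maximum principle, whereas the paper asserts it without proof; this is a welcome (if minor) addition of rigor, not a change of approach.
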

\begin{proof}
    The expected message complexity per round is given by the expected number
    of tokens moved by the diffusion algorithm in one step. In the regime $t
    \to \infty$, $\vpot{t}$ approaches the Kirchhoff potentials $\vec p$. By
    Corollary \ref{cor:estim}, the expected number of tokens at a node $u\in
    \nodes$ approaches
    $
        \ex[\ntok{\infty}{K}(u)] = K \, \vol(u) \, \pot{\infty}_u = K \, \vol(u) \, p_u,
    $
    thus the expected total number of tokens is 
    \[
        \sum_{u \in \nodes} \ex[\ntok{\infty}{K}(u)] = K \sum_u \vol(u) p_u \le
        K n \vol_{\mathrm{max}} p_{\source}, 
    \]
    where the inequality follows from the fact that $0 = p_{\sink} \le p_u \le p_{\source}$. Observing that 
    $
        p_{\source} = p_{\source} - p_{\sink} = \vec p\tp \vec b = \vec p\tp L \vec
        p = \energy
    $
    concludes the proof. 
\end{proof}

%

\section{Beyond potentials: An outlook}
\label{se:conclu}

Our results show that the effectiveness of decentralized, simple 
processes for electrical flow computation can be quantitatively 
analyzed, which is a step forward in the microscopic-level analysis of 
social, biological and artificial systems that can be described 
in terms of time-varying resistive networks or as current-reinforced 
random walks. 

On the other hand, while opinion dynamics were originally proposed 
as elementary models of information exchange and manipulation in social 
networks, our results highlight their potential as versatile and powerful 
primitives for collective computing. We believe this is a perspective that 
deserves further investigation.



\end{document}